\def \figPath {./pic_corr/}
\newtheorem{lemma}{Lemma}
\newtheorem{theorem}{Theorem}
\newtheorem{definition}{Definition}
\newtheorem{example}{Example}
\newtheorem{corollary}{Corollary}
\newtheorem{remark}{Remark}
\newcommand*{\indep}{%
  \mathbin{%
    \mathpalette{\@indep}{}%
  }%
}
\newcommand*{\nindep}{%
  \mathbin{
    \mathpalette{\@indep}{\not}
  }%
}
\newcommand*{\@indep}[2]{%
  \sbox0{$#1\perp\m@th$}
  \sbox2{$#1=$}
  \sbox4{$#1\vcenter{}$}
  \rlap{\copy0}
  \dimen@=\dimexpr\ht2-\ht4-.2pt\relax
  \kern\dimen@
  {#2}%
  \kern\dimen@
  \copy0 
} 
\newcommand{\eqdef}{\overset{\mathrm{def}}{=\joinrel=}}
\newenvironment{alignSmall}{\nobreak\small\noindent\align}{\endalign}
\newenvironment{alignFootnotesize}{\nobreak\footnotesize\noindent\align}{\endalign}
\newenvironment{alignScriptsize}{\nobreak\scriptsize\noindent\align}{\endalign}
\begin{document}

%

\title{Quantifying Differential Privacy under Temporal Correlations}

\author{\IEEEauthorblockN{Yang Cao\IEEEauthorrefmark{1}\IEEEauthorrefmark{2},
Masatoshi Yoshikawa\IEEEauthorrefmark{1},
Yonghui Xiao\IEEEauthorrefmark{2}, 
Li Xiong\IEEEauthorrefmark{2}}
\IEEEauthorblockA{\IEEEauthorrefmark{1}Department of Social Informatics, Kyoto University, Kyoto, Japan
\\ Email: \{soyo@db.soc.,  yoshikawa@\}i.kyoto--u.ac.jp }
\IEEEauthorblockA{\IEEEauthorrefmark{2}Department of Math and Computer Science, Emory University, Atlanta, USA\\
Email: \{ycao31, yonghui.xiao, lxiong\}@emory.edu}
}

\maketitle
\begin{abstract}
Differential Privacy (DP) has received increasing attention as a rigorous privacy framework.
Many existing studies employ traditional DP mechanisms (e.g., the Laplace mechanism) as primitives, which assume that the data are independent, or that adversaries do not have knowledge of the data correlations.
However, continuous generated data in the real world tend to be temporally correlated, and such correlations can be acquired by adversaries.
In this paper, we investigate the potential privacy loss of a traditional DP mechanism under temporal correlations in the context of continuous data release.
First, we model the temporal correlations using Markov model and analyze the privacy leakage of a DP mechanism when adversaries have knowledge of such temporal correlations.
Our analysis reveals that the privacy loss of a DP mechanism may \textit{accumulate and increase over time}.
We call it \textit{temporal privacy leakage}.
Second, to measure such privacy loss, we design an efficient algorithm for calculating it in polynomial time. 
Although the temporal privacy leakage may increase over time, we also show that its supremum may exist in some cases.
Third, to bound the privacy loss, we propose mechanisms that convert any existing DP mechanism into one against temporal privacy leakage.
Experiments with synthetic data confirm that our approach is efficient and effective. 
\end{abstract}

\section{Introduction}
\label{sec:intro}
With the development of wearable and mobile devices, vast amount of temporal data generated by individuals are being collected, such as trajectories and web page click streams. 
The {continuous publication} of statistics from these temporal data has the potential for significant social benefits such as disease surveillance\cite{bradley_biosense:_2005}, real-time traffic monitoring\cite{federal_highway_administration_fhwa_traffic_2013} and web mining\cite{kosala_web_2000}.
However, privacy concerns hinder the wider use of these data.
To this end, \textit{differential privacy under continual observation}
\cite{acs_case_2014}
\cite{bolot_private_2013}
\cite{chan_private_2011}
\cite{dwork_differential_2010-2}
\cite{dwork_differential_2010}
\cite{fan_fast:_2013}
\cite{kellaris_differentially_2014}
\cite{xiao_dpcube:_2014} 
has received increasing attention because it provides a rigorous privacy guarantee.
Intuitively, differential privacy (DP)\cite{dwork_differential_2006} ensures that the modification of any single user's data in the database has a ``slight'' (bounded in $ \epsilon $) impact on the change in outputs.
The parameter $ \epsilon $ is defined to be a positive real number to control the privacy level.
Larger values of $ \epsilon $ result in larger privacy leakage.
\begin{figure}[t]
\centering
\includegraphics[scale=0.42]{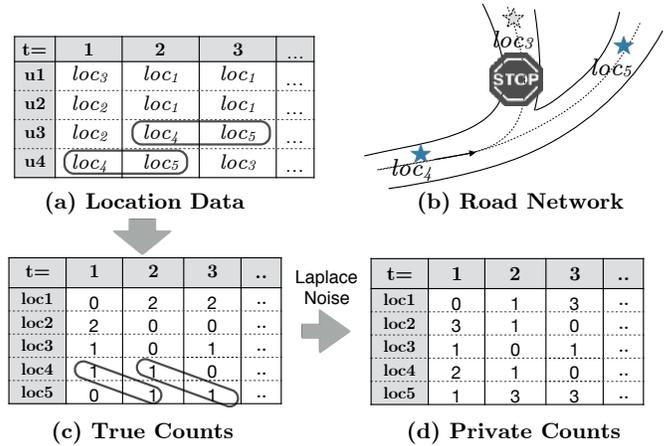} 
\vspace{-1pt}
\caption{Differentially Private Continuous Aggregate Release under Temporal Correlations.}
\label{fig:corr_example}
\vspace{-2pt}
\end{figure}

However, 
most existing works on differentially private continuous aggregate release has an implicit assumption of data independence, i.e., there is no correlation between the data. 
Recent studies\cite{kifer_no_2011}\cite{kifer_rigorous_2012}\cite{kifer_pufferfish:_2014} point out that traditional DP may not guarantee the expected privacy on correlated data.
The following example shows that the temporal correlations may degrade the expected privacy guarantee of  DP.

\begin{example}
\label{example}
Consider the scenario of continuous aggregate release illustrated in Figure \ref{fig:corr_example}.
A trusted server collects users' locations at each time point in Figure \ref{fig:corr_example}(a) and continuously publishes aggregate (i.e., the counts of people at each location) in Figure\ref{fig:corr_example}(c) with differential privacy.
Our goal is to achieve $ \epsilon $-DP at each time point $ t $ (event-level $ \epsilon $-DP\cite{dwork_differential_2010-2}
\cite{dwork_differential_2010}) where $ t\in[1,T] $.
Suppose that each user appears at only one location at each time point.
According to the Laplace mechanism\cite{dwork_calibrating_2006}, adding $ Lap(1/\epsilon)$ noise\footnote{$Lap({b})$ denotes a Laplace distribution with variance $ 2b^2 $.} to perturb each count in Figure\ref{fig:corr_example}(c) can achieve $ \epsilon $-DP at each time point. 
However,  the expected privacy guarantee may decay due to temporal correlations as follows. 
Using auxiliary information, such as road networks, an attacker may know users' mobility patterns, such as ``always arriving at $ loc_5 $ after visiting $ loc_4 $''  (shown in Figure \ref{fig:corr_example}(b)), leading to the patterns illustrated in solid lines of Figure \ref{fig:corr_example}(c). 
The temporal correlation due to this road network can be represented as $ \Pr(l^{t}=loc_5|l^{t-1}=loc_4) =1$ where $ l^t $ is the location of a user at time $ t $.
That is, given the previous counts of $ loc_4 $, an attacker can derive the current count of $ loc_5 $.
Consequently, because an adversary can perform inference due to such correlations between the two consecutive time points (i.e., as if the same count is released two times),  adding $ Lap(1/\epsilon) $ noise to each count guarantees $ 2\epsilon $-DP at the time point.
Furthermore, let us consider an extreme case of temporal correlation (e.g., a terrible traffic congestion) $ \Pr(l^{t}=loc_4|l^{t-1}=loc_4) = \Pr(l^{t}=loc_5|l^{t-1}=loc_5) =1$ (i.e., the counts of $ loc_4 $ and $ loc_5 $  will not change over time). 
Then, adding $ Lap(1/\epsilon) $ noise to each count guarantees $T\epsilon $-DP at time point $ T $.

\end{example}

It is reasonable to consider that adversaries may obtain the temporal  correlations, which commonly exist in our real life and are easily acquired from public information or historical data.
In addition to road networks,
there are countless factors that may cause temporal correlations such as the common patterns characterizing human activities\cite{gambs_next_2012} and weather conditions\cite{horanont_weather_2013}. 

Few studies in the literature investigated such potential privacy loss of event-level $ \epsilon $-DP  under temporal correlations as shown in Example \ref{example}.
A direct method (without finely utilizing the \textit{probability} of correlation) involves amplifying the perturbation in terms of \textit{group differential privacy}\cite{chen_correlated_2014}\cite{dwork_calibrating_2006}, i.e., protecting the correlated data as a group.
In Example \ref{example}, for temporal correlation $ \Pr(l^{t}=loc_5|l^{t-1}=loc_4) =1$, we can protect the counts of $ loc_4 $ at time $ t -1$ and $ loc_5 $ at time $ t $ in a group (the sensitivity becomes 2) by increasing the scale of the perturbation to $ Lap(2/\epsilon) $ at each time point; 
for temporal correlation $ \Pr(l^{t}=loc_i|l^{t-1}=loc_i) =1$, in order to guarantee $ \epsilon $-DP at time $ T $, we need to add $ Lap(T/\epsilon) $ noise at each time point because the privacy leakage accumulates over time.
However, this technique is not suitable for \textit{probabilistic} correlations to finely prevent privacy leakage and may over-perturb the data as a result.  
For example, regardless of whether $ \Pr(l^{t}=loc_i|l^{t-1}=loc_i) $ is 1 or 0.1, it always protects the correlated data in a bundle.

Although a few studies investigated the issue of differential privacy under \textit{probabilistic} correlations, they are not applicable for \textit{continuous data release} because of the different problem settings.
The following two works focused on one-shot data release and different types of correlations.
Yang et al. \cite{yang_bayesian_2015} proposed Bayesian differential privacy (BDP),  which measures the privacy leakage under probabilistic correlations between tuples, modeled by a Gaussian Markov Random Field without taking time factor into account.
Liu et al. \cite{changchang_liu_dependence_2016} proposed \textit{dependent differential privacy} by introducing two parameters of \textit{dependence size} and \textit{probabilistic dependence relationship} between tuples.
However, it is not clear whether we can specify them for temporally correlated data.
Another line of work\cite{shokri_quantifying_2011}\cite{theodorakopoulos_prolonging_2014}\cite{xiao_protecting_2015} has investigated adversaries with knowledge of temporal correlations. 
They focused on designing new mechanisms for protecting a single user's location privacy extending DP, whereas we attempt to quantify the potential privacy loss of a traditional DP mechanism in the context of continuous aggregate release.

We call the adversary considered in traditional DP with additional knowledge of probabilistic temporal correlations \textit{adversary$ _\mathcal{T} $}.
Rigorously quantifying and bounding the privacy leakage against adversary$ _\mathcal{T} $ remains a challenge.
Therefore, our goal is to solve the following problems in this paper:
\begin{itemize}
\item How do we formalize \textit{adversary$ _\mathcal{T} $} and define the privacy loss against adversary$_\mathcal{T} $? (Section \ref{sec:privacy_analysis})
\item How do we calculate the privacy loss against adversary$_\mathcal{T} $?  (Section \ref{sec:calculate_tpl})
\item How do we bound the privacy loss against adversary$_\mathcal{T} $?  (Section \ref{sec:release_mechanism})
\end{itemize}

\subsection{Contributions}
In this work, for the first time, we quantify and bound the privacy leakage of a DP mechanism due to temporal correlations.
Our contributions are summarized as follows.

First, we rigorously define \textit{adversary$ _\mathcal{T} $} with temporal correlations that are described by the commonly used Markov model.
The temporal correlations include \textit{backward} and \textit{forward} correlations, i.e., $ \Pr(l_i^{t-1}|l_i^{t}) $ and $ \Pr(l_i^{t}|l_i^{t-1}) $ where $ l_i^t $ denotes the value (e.g., location) of user $ i $ at time $ t $.
We then define \textit{Temporal Privacy Leakage (TPL)}  as the privacy loss of a DP mechanism at time $ t $ against adversary$_\mathcal{T} $.
TPL includes two parts: \textit{Backward Privacy Leakage (BPL)} and \textit{Forward Privacy leakage (FPL)} due to the existence of backward and forward temporal correlations.
Our analysis shows that BPL may accumulates from previous privacy leakage and FPL increases with future release.
Intuitively, BPL at time $ t $ is affected by previously released data and FPL at time $ t $ will be affected by future releases.
We define $ \alpha $-\textit{differential privacy under temporal correlation}, denoted as $ \alpha$-$DP_\mathcal{T}$, to formalize the privacy guarantee of a DP mechanism against adversary$_\mathcal{T} $, i.e., the temporal privacy leakage should be bounded in $ \alpha $.
We prove a new form of sequential composition theorem for {\small $ \alpha$-$DP_\mathcal{T}$} (different from the traditional sequential composition\cite{mcsherry_privacy_2009} for $ \epsilon $-DP).

Second, we efficiently calculate the temporal privacy leakage under given backward and forward temporal correlations.
We transform the calculation of temporal privacy leakage in finding an optimal solution of a \textit{linear-fractional programming} problem.
This type of problem can be solved using a simplex algorithm in exponential time.
By exploiting the constraints, we propose a polynomial-time algorithm to finely quantify the temporal privacy leakage.

Third, we design private data release algorithms that can be used to convert a traditional DP mechanism into one satisfying $ \alpha $-DP$_\mathcal{T} $.
A challenge is that the temporal privacy leakage may increase over time so that $ \alpha $-DP$_\mathcal{T} $ is hard to achieve when the length of release time $ T $ is unknown.
In our first solution, we prove that the supremum of  temporal privacy leakage may exist in some cases, and in these cases, we allocate appropriate privacy budgets to make sure the increased temporal privacy leakage will never be greater than $ \alpha $, no matter how long the $ T $ is.
However,  when $ T $ is
too short for the accumulation of temporal privacy leakage to result in a significant increase, 
we may over-perturb the data.
The second solution is to exactly achieve $ \alpha $-DP$_\mathcal{T} $ at each time point by finely calculating the temporal privacy leakage.

Finally, experiments with synthetic data confirm the efficiency and effectiveness of our privacy leakage quantification algorithm. 
We also demonstrate the impact of different degree of temporal correlations on privacy leakage.

\section{Preliminaries}

\subsection{Differential Privacy}
Differential privacy\cite{dwork_differential_2006} is a formal definition of data privacy.
Let $ D $  be a database and $ D' $ be a copy of $ D $ that is different in any one tuple.
$ D $ and $ D' $ are \textit{neighboring databases}. 
A differentially private output from $ D $ or $ D' $  should exhibit little difference.

\begin{definition}[$ \epsilon $-DP]
\label{def:dp}
$\mathcal{M}$ is a randomized mechanism that takes as input $ D $ and outputs $ \bm{r} $, i.e., {\small $ \mathcal{M}(D)=\bm{r}$}.
$\mathcal{M}$ satisfies $ \epsilon $-differential privacy ($ \epsilon $-DP) if the following inequality is true for any pair of neighboring databases $ D, D'$ and all possible outputs $ \bm{r} $.
\vspace{-5pt}
\begin{myAlignSS}
\small
\log\frac{\Pr(\bm{r}|D)}{\Pr(\bm{r}|D')} \leq \epsilon.  \label{eq:dp}
\end{myAlignSS}
 \end{definition}
 
 \vspace{-12pt}
The parameter {\small $ \epsilon $}, called the \textit{privacy budget}, represents the degree of privacy offered.
Intuitively, a lower value of {\small $ \epsilon $}  implies stronger privacy guarantee and a larger perturbation noise, and a higher value of  {\small $ \epsilon $} implies a weaker privacy guarantee while possibly achieving higher accuracy.

A commonly used method to achieve $ \epsilon $-DP is the Laplace mechanism, which adds random noise drawn from a calibrated Laplace distribution into the aggregates to be published.

\begin{theorem}[Laplace Mechanism]
Let {\footnotesize $ Q:D\rightarrow \mathbb{R} $} be a statistical query on database $ D $.
The sensitivity of {\footnotesize $ Q $}  is defined as the maximum $ L_1 $ norm between {\footnotesize $ Q(D) $} and {\footnotesize $ Q(D') $}, i.e., {\footnotesize $ \Delta=\max_{D,D'}||Q(D)-Q(D')||_1$}.
We can achieve $ \epsilon $-DP by adding Laplace noise with scale {\footnotesize $ \Delta/\epsilon $}, i.e., {\footnotesize $ Lap(\Delta/\epsilon) $}.
\end{theorem}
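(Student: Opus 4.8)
The plan is to verify Definition~\ref{def:dp} directly for the explicit mechanism $\mathcal{M}(D)=Q(D)+Y$, where $Y$ is a single draw from $Lap(\Delta/\epsilon)$. First I would record the output density. Because $Lap(b)$ has density $\tfrac{1}{2b}\exp(-|y|/b)$ and the noise is additive with scale $b=\Delta/\epsilon$, observing output $r$ on input $D$ has density
\[
\Pr(r\mid D)=\frac{\epsilon}{2\Delta}\exp\!\left(-\frac{\epsilon\,|r-Q(D)|}{\Delta}\right).
\]

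Next I would form the likelihood ratio for an arbitrary neighboring pair $D,D'$ and arbitrary output $r$. The normalizing constants cancel, leaving
\[
\frac{\Pr(r\mid D)}{\Pr(r\mid D')}=\exp\!\left(\frac{\epsilon}{\Delta}\bigl(|r-Q(D')|-|r-Q(D)|\bigr)\right).
\]
The key step is to control the exponent via the reverse triangle inequality, $|r-Q(D')|-|r-Q(D)|\le|Q(D)-Q(D')|$, and then to invoke the definition of sensitivity. Since $Q$ is real-valued, $|Q(D)-Q(D')|=\|Q(D)-Q(D')\|_1\le\Delta$, so the ratio is at most $\exp(\epsilon)$; taking logarithms recovers inequality~(\ref{eq:dp}) exactly.

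As this is a classical result, I do not expect a genuine obstacle. The only point meriting care is that a single noise scale $\Delta/\epsilon$ must work uniformly over all neighboring $D,D'$ and all outputs $r$; this is guaranteed because the $\max$ in the definition of $\Delta$ makes $\Delta$ an upper bound on $\|Q(D)-Q(D')\|_1$ for every neighboring pair, while the reverse triangle inequality holds for every $r$. I would also note in passing that the same argument generalizes verbatim to vector-valued queries by adding per-coordinate independent Laplace noise, since the $L_1$ structure of the density then factorizes across coordinates.
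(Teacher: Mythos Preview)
Your argument is correct and is the standard proof of the Laplace mechanism. The paper does not actually prove this theorem; it is stated as a preliminary result with a citation to the original source, so there is no paper-side proof to compare against.
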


\subsection{Privacy Leakage}
\label{subsec:pl}
Let us first discuss the adversaries tolerated by differential privacy, and then formalize privacy leakage w.r.t. such adversaries.
Differential privacy is able to protect against the attackers who even have knowledge of all users' data in the database except the one of the targeted victim\cite{dwork_algorithmic_2013}.
Let $ i \in \bm{U}$ be a user in the database $ D $.
Let $ A_i$ be an adversary who targets user $ i $ and has knowledge of {\small $ D_\mathcal{K} =D-\{l_i\}$} where {\small $ l_i \in [loc_1,\ldots,loc_n]$} denotes the data of user $ i $.
The adversary $ A_i $ observes the private output $ \bm{r} $ and attempts to guess whether the possible value of user $ i $ is  {\small $ loc_j$}  or {\small $ loc_k$}  where {\small $  loc_j, loc_k \in [loc_1,\ldots,loc_n]$}.
%
%
We define the privacy leakage of a DP mechanism as follows.
\begin{definition}[Privacy Leakage of a DP mechanism against $ A_i $]
\label{def:pl}
Let $ \bm{U} $ be a set of users in the database.
Let $ A_i$ be an adversary who targets user $ i $ and knows all the tuples in the database except the one of user $ i $.
The \textit{privacy leakage} of a differentially private mechanism $\mathcal{M}$ against one $ A_i $ and all $ A_i, i\in\bm{U}$ are defined, respectively, as follows in which $ l_i $ and $ l_i' $ are two different possible values of user $ i $'s data.
\begin{myAlignSSS}
&PL_0(A_i,\mathcal{M}) \eqdef \sup_{\bm{r},l_i,l_i'} \log \frac{\Pr(\bm{r}|l_i,D_\mathcal{K})}{\Pr(\bm{r}|l_i',D_\mathcal{K})} \nonumber\\
&PL_0(\mathcal{M}) \eqdef \max_{\forall A_i,i\in\bm{U}} PL_0(A_i,\mathcal{M}) = \sup_{\bm{r},D,D'} \log  \frac{\Pr(\bm{r}|D)}{\Pr(\bm{r}|D')} \nonumber
\end{myAlignSSS}
\end{definition}
\vspace{-10pt}
In other words, the privacy budget of a DP mechanism can be considered as a metric of \textit{privacy leakage}.
The larger {\small $ \epsilon $}, the larger the privacy leakage.
Hence, we can say  that $ \mathcal{M} $ satisfies $ \epsilon $-DP if  {\small $ PL_0(\mathcal{M}) \leq \epsilon$}.
We note that a ${\epsilon}'$-DP mechanism automatically satisfies $ \epsilon $-DP if $  {\epsilon}' < {\epsilon}$. 
For convenience, in the following parts of this paper, when we say that $ \mathcal{M} $ satisfies $ \epsilon $-DP, we mean that the privacy leakage is equal to $ \epsilon $.

\vspace{-6pt}
\subsection{Problem Setting}
\label{sec:problem_setting}
We attempt to quantify  the potential privacy loss of a DP mechanism under temporal correlations in the context of \textit{continuous data release} (e.g., releasing private counts at each time as shown in Figure \ref{fig:corr_example}).
 Users in the database, denoted by $ \bm{U} $,  are generating data continuously.
Let {\small $ \bm{loc}=\{loc_1,\ldots,loc_n\} $} be all possible values of user's data.
We denote the value of user $ i $ at time $ t $ by $ l_i^t $. 
A trusted server collects the data of each user into the database {\small $ D^t  =\{l_1^t,\ldots,l_{|\bm{U}|}^t\} $} at each time $t$ (e.g., the columns in Figure \ref{fig:corr_example}(a)).
A DP mechanism $ \mathcal{M}^t $ releases differentially private output $ \bm{r}^t$ independently at each time $ t $.
Our goal is to quantify and bound the potential privacy loss of $ \mathcal{M}^t $ against adversaries with knowledge of temporal correlations. 
We summarize the notations used in this paper in Table \ref{tb:notation}.
We note that while we use location data in Example \ref{example}, the problem setting is general for temporally correlated data.

\vspace{-2pt}
Our problem setting is identical to \textit{differential privacy under continual observation} in the literature
\cite{acs_case_2014}
\cite{bolot_private_2013}
\cite{chan_private_2011}
\cite{dwork_differential_2010-2}
\cite{dwork_differential_2010}
\cite{fan_fast:_2013}
\cite{kellaris_differentially_2014}
\cite{xiao_dpcube:_2014}.
In contrast to ``one-shot'' data release over a static database, the adversaries can observe multiple differentially private outputs, i.e., $ \bm{r}^1,\ldots,\bm{r}^t$.
There are typically two different privacy goals in the context of continuous data release: \textit{event-level} and \textit{user-level}
\cite{dwork_differential_2010-2}
\cite{dwork_differential_2010}. 
The former protects each user's single data point at time $ t $ (i.e., the neighboring databases are {\small $ D^t $} and {\small $ {D^t}' $}), whereas the latter protects the presence of a user with all her data on the timeline (i.e., the neighboring databases are {\small $\{ D^1 ,\ldots,D^t \}$} and {\small $\{ {D^1}',\ldots,{D^t}' \}$}).
In this work, we mainly study the privacy leakage at a single time point (event-level) under temporal correlations, and we also extend the discussion to user-level privacy by studying the composability of the privacy leakage.


\renewcommand{\arraystretch}{1.25}
\begin{table}[t]
\centering
\scriptsize
\caption{Summary of Notations.}
\begin{tabularx}{8.5cm}{|p{1cm}|X|}
\hline $ \bm{U}$ & The set of users in the database \\
\hline $ i$ & The $ i $-th user where $ i \in [1, |\bm{U}|] $\\
\hline $ \bm{loc}$ & Value domain $ \{loc_1,\ldots, loc_n\} $ of all user's data \\ 
\hline $ l_i^t, {l_i^t}' $&  The data of user $ i $ at time $ t $, $ l_i^t \in \bm{loc} $, $ l_i^t \neq {l_i^t}' $\\
\hline ${D}^t$ & The database at time $ t$, ${D}^t=\{l_1^t,\ldots,l_n^t\} $\\
\hline $ \mathcal{M}^t $ & Differentially private mechanism over $ D^t $\\
\hline $ \bm{r}^t $ & Differentially private output  at time $t$ \\ 
\hline $ A_i $ &   The adversary who targets user $i $, considered in traditional DP\\ 
\hline $ A_i^{\mathcal{T}} $ &  Adversary $ A_i $ with additional knowledge of temporal correlations  \\ 
\hline $P_i^B$ &  Transition matrix that represents {\scriptsize$ \Pr(l_i^{t-1}|l_i^t) $}, \newline i.e., backward temporal correlation, known to $ A_i^\mathcal{T} $ \\  
\hline $P_i^F$ &  Transition matrix that represents {\scriptsize$ \Pr(l_i^{t}|l_i^{t-1}) $}, \newline i.e., forward temporal correlation, known to $ A_i^\mathcal{T} $ \\  
\hline ${D}_\mathcal{K}^t$ & The subset of database $ D^t - \{l_i^t\} $, known to $A_i^\mathcal{T} $\\
\hline 
\end{tabularx}
\label{tb:notation} 
\end{table}

\vspace{-6pt}
\section{Analyzing Privacy Leakage}
\label{sec:privacy_analysis}
In the following, we first formalize adversary with temporal correlations in Section \ref{subsec:adv}.
We then define and analyze \textit{temporal privacy leakage} in Section \ref{subsec:tpl}.
We provide a new privacy notion of $ \alpha \text{-} DP_\mathcal{T}$ against temporal privacy leakage and prove its composability in Section \ref{subsec:comp}.
Finally, we make a few important observations in Section \ref{subsec:discussion}.

\vspace{-5pt}
\subsection{Adversay with Knowledge of Temporal Correlations}
\label{subsec:adv}
\vspace{-2pt}
\textbf{Markov Chain for Temporal Correlations.}
The \hyphenation{Mar-kov} Markov chain (MC) is extensively used in modeling user mobility profiles\cite{gambs_next_2012}\cite{mathew_predicting_2012}\cite{shokri_quantifying_2011}.
For a time-homogeneous first-order MC, a user's current value only depends on the previous one.
The parameter of the MC is the \textit{transition matrix}, which describes the probabilities for transition between values.
The sum of the probabilities in each row of the transition matrix is $ 1 $.
A concrete example of transition matrix and time-reversed one  for location data  is shown in Figure \ref{fig:mc}.
As shown in Figure \ref{fig:mc}(a), if user $ i $ is at $ loc_1 $ now (time $ t $); then, the probability of coming from $ loc_3 $  (time $ t-1 $) is $0.7$, namely, {\small $ \Pr( l_i^{t-1} = loc_3 | l_i^{t} =loc_1 ) = 0.7$}.
As shown in Figure \ref{fig:mc}(b), if user $ i $ was at $ loc_3 $ at the previous time $ t-1 $, then the probability of being at $ loc_1 $ now (time $ t $) is $0.6$; namely, {\small $ \Pr( l_i^{t} = loc_1 | l_i^{t-1} =loc_3 ) = 0.6$}.
We call the transition matrices in Figure \ref{fig:mc}(a) and (b) as backward temporal correlation and forward temporal correlation, respectively.

\begin{definition}[Temporal Correlations]
\label{def:temporal_corr}
The backward and forward temporal correlations between user $ i $'s data  $ l_i^{t-1} $ and $  l_i^{t} $ are described by  transition matrices {\small ${P}_i^B,{P}_i^F \in \mathbb{R}^{n \times n}$}, representing {\small $ \Pr( l_i^{t-1} | l_i^{t}) $} and {\small $ \Pr( l_i^{t} | l_i^{t-1}) $}, respectively.
\end{definition}

\vspace{-5pt}
It is reasonable to consider that the backward and/or forward temporal correlations could be acquired by adversaries.
For example, the adversaries can learn them from user's historical trajectories (or the reversed trajectories) by well studied methods such as 
Maximum Likelihood estimation (supervised) or Baum-Welch algorithm (unsupervised). 
Also, if the initial distribution of {\small $ l_i^1 $}  is known (i.e., {\small $ \Pr(l_i^1) $}), the backward temporal correlation (i.e., {\small $ \Pr(l_i^{t-1}|l_i^{t}) $}) can be derived from the forward temporal correlation (i.e., {\small $ \Pr(l_i^{t}|l_i^{t-1}) $}) by the following Bayesian inference. 
\vspace{-10pt}
\begin{myAlignSS}
\Pr(l^{t-1}_i | l^t_i) = \frac{\Pr(l^{t}_i | l^{t-1}_i)*\Pr(l^{t-1}_i)}{\sum\nolimits_{l^{t-1}_i}{\Pr(l^{t}_i | l^{t-1}_i)*\Pr(l^{t-1}_i)}} \nonumber
\end{myAlignSS}


\vspace{-10pt}
\noindent
Since estimating temporal correlations from data is beyond the scope of this work, we assume the adversaries' prior knowledge about temporal correlations is given in our framework.

We now define  an ``updated version'' of $ A_i $ (in Definition \ref{def:pl}) with knowledge of temporal correlations.

\begin{definition}[Adversary$ _\mathcal{T }$]
\label{def:adversary_t}
 Adversary$ _\mathcal{T }$ is a class of adversaries who have knowledge of (1)  all other users' data {\small $ D_\mathcal{K}^t $} at  each time $ t $ except the one of the targeted victim, i.e., {\small $ D_\mathcal{K}^t=D^t-\{l_i^t\} $}, and (2) backward and/or forward temporal correlations represented as transition matrices {\footnotesize$ P_i^B $} and {\footnotesize $ P_i^F $}.
We denote Adversary$ _\mathcal{T }$ who targets user $ i $ by {\footnotesize $ A_i^\mathcal{T}({P}_i^B, {P}_i^F)$}.
\end{definition}

\begin{figure}[t]
\centering
\includegraphics[scale=0.28]{\figPath/mc2} 
\caption{Examples of Temporal Correlations.}
\label{fig:mc}
\vspace{-8pt}
\end{figure}

\vspace{-5pt}
There are three types of adversary$ _\mathcal{T} $: 
(i) {\footnotesize $ A_i^\mathcal{T}({P}_i^B, \emptyset) $}, (ii) {\footnotesize $ A_i^\mathcal{T}(\emptyset, {P}_i^F)$},  (iii) {\footnotesize $ A_i^\mathcal{T}({{P}_i^B,{P}_i^F})$},
where $ \emptyset $ denotes that  the corresponding correlations are not known to the adversaries\footnote{The adversaries of types (i) and (ii) will not ``guess'' the missing correlations; otherwise, they fall under type (iii).}.
For simplicity, we denote types (i) and (ii) as {\footnotesize $ A_i^\mathcal{T}({P}_i^B) $} and  {\footnotesize $ A_i^\mathcal{T}({P}_i^F)$}, respectively.
We note that {\footnotesize $ A_i^\mathcal{T}(\emptyset, \emptyset) $} is the same as the traditional DP adversary $ A_i $ without any knowledge of temporal correlations. 

\begin{figure*}
\centering
\includegraphics[scale=0.33]{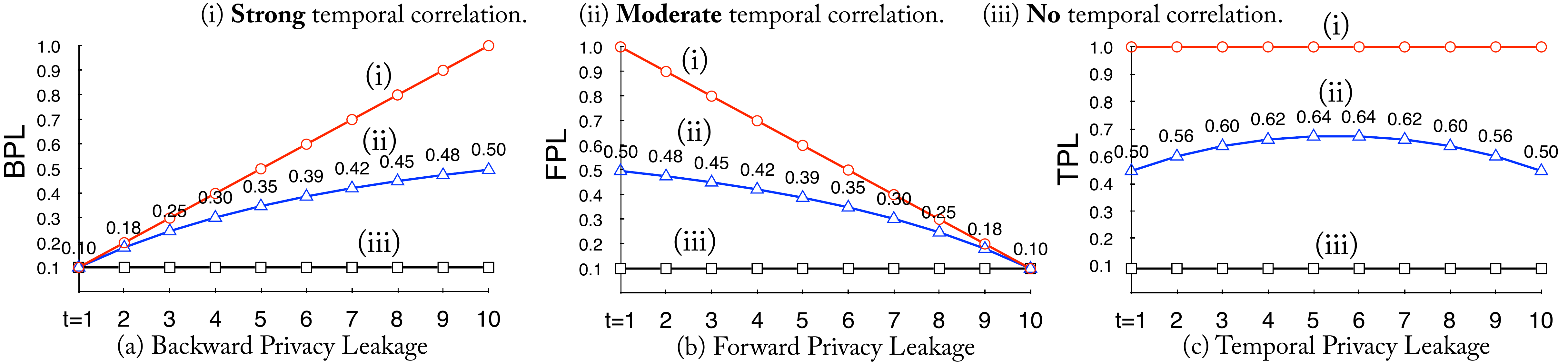} 
\caption{Example of Temporal Privacy Leakage of $ Lap(1/0.1) $ at each time point.}
\label{fig:tpl}
\vspace{-20pt}
\end{figure*}

\vspace{-5pt}
We now show what information $ A_i^\mathcal{T} $ can derive from the temporal correlations.
\begin{lemma}
\label{lem:tpl_b}
The adversary  {\footnotesize $ A_i^\mathcal{T} $} who has knowledge of $ P_i^B $ can derive {\footnotesize $\Pr(D^{t-1}|D^{t} )  = \Pr({l_i^{t-1}}| {l_i^{t}})  $}.
\end{lemma}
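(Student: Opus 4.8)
The plan is to unfold the database-level conditional probability into a user-level one by exploiting the fact that $A_i^\mathcal{T}$ already knows every other user's data. First I would split the databases into the targeted victim's value and the known remainder, writing $D^{t-1}=\{l_i^{t-1}\}\cup D_\mathcal{K}^{t-1}$ and $D^{t}=\{l_i^{t}\}\cup D_\mathcal{K}^{t}$, so that
\[
\Pr(D^{t-1}\mid D^{t})=\Pr\!\left(l_i^{t-1},D_\mathcal{K}^{t-1}\;\middle|\;l_i^{t},D_\mathcal{K}^{t}\right).
\]
The point is that the only quantity about which $A_i^\mathcal{T}$ is uncertain is user $i$'s own value, since by Definition~\ref{def:adversary_t} the adversary holds $D_\mathcal{K}^{t-1}$ and $D_\mathcal{K}^{t}$ as background knowledge.

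Next I would invoke the modeling assumption that distinct users' data streams are mutually independent, which is implicit in describing the temporal correlations of each user by its own transition matrix $P_i^B$. Under this assumption the joint law factorizes across users, so the conditional above separates as
\[
\Pr\!\left(l_i^{t-1},D_\mathcal{K}^{t-1}\;\middle|\;l_i^{t},D_\mathcal{K}^{t}\right)=\Pr(l_i^{t-1}\mid l_i^{t})\cdot \Pr(D_\mathcal{K}^{t-1}\mid D_\mathcal{K}^{t}).
\]
Because $A_i^\mathcal{T}$ observes $D_\mathcal{K}^{t-1}$ and $D_\mathcal{K}^{t}$ as fixed, known values, the second factor equals $1$ from the adversary's viewpoint, leaving $\Pr(D^{t-1}\mid D^{t})=\Pr(l_i^{t-1}\mid l_i^{t})$. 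Finally, by Definition~\ref{def:temporal_corr} the right-hand side is precisely the quantity encoded in $P_i^B$, so $A_i^\mathcal{T}$ can indeed compute it, establishing both the equality and the ``can derive'' part of the claim.

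The main obstacle is conceptual rather than computational: I must be careful to treat the known data $D_\mathcal{K}^{t-1},D_\mathcal{K}^{t}$ as deterministic conditioning information rather than as genuine random variables. If one forgets this, $\Pr(D_\mathcal{K}^{t-1}\mid D_\mathcal{K}^{t})$ would itself be governed by the other users' backward transitions and would not collapse to $1$. I would therefore state explicitly, before the computation, both the cross-user independence assumption and the convention that all probabilities are taken conditional on the adversary's background knowledge; with those in place the factorization and the cancellation of the other-user factor are immediate, and the first-order Markov property supplies $P_i^B=\Pr(l_i^{t-1}\mid l_i^{t})$.
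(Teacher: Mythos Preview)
The paper omits the proof of this lemma entirely (``We omit the proofs of the lemmas due to space limitations''), so there is no reference argument to compare against. Your approach is the natural one and is correct: split each database as $D^{t}=\{l_i^{t}\}\cup D_\mathcal{K}^{t}$, treat all probabilities as implicitly conditioned on the adversary's background knowledge so that the $D_\mathcal{K}$-event becomes certain, and use the per-user independence implicit in the Markov modeling to strip the other users' data from the conditioning on $l_i^{t-1}$. You have also correctly identified the two assumptions (cross-user independence and the convention that probabilities are relative to the adversary's knowledge) that the paper leaves implicit but that are needed for the equality to hold.
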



\begin{lemma}
\label{lem:tpl_f}
The adversary  {\small $ A_i^\mathcal{T} $} who has knowledge of $ P_i^F $ can derive {\small $\Pr(D^{t}|D^{t-1} )  = \Pr({l_i^{t}}| {l_i^{t-1}})  $}.
\end{lemma}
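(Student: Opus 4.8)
The plan is to prove Lemma \ref{lem:tpl_f} by establishing the two claimed equalities in turn: first the \emph{database-level reduction} $\Pr(D^t \mid D^{t-1}) = \Pr(l_i^t \mid l_i^{t-1})$, and then noting that the right-hand side is exactly the $(l_i^{t-1}, l_i^t)$-entry of the known transition matrix $P_i^F$, so that the adversary can indeed derive it. The key observation driving the reduction is that $A_i^\mathcal{T}$ already knows $D_\mathcal{K}^t = D^t - \{l_i^t\}$ at every time point, so the only uncertain coordinate in $D^t$ is the victim's value $l_i^t$. The argument is the forward-time mirror of Lemma \ref{lem:tpl_b}, and I would present it in parallel so the reader sees the symmetry.

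First I would write $D^t = \{l_i^t\} \cup D_\mathcal{K}^t$ and similarly $D^{t-1} = \{l_i^{t-1}\} \cup D_\mathcal{K}^{t-1}$, and expand the conditional probability $\Pr(D^t \mid D^{t-1})$ over the joint values of all users. The crucial modeling assumptions to invoke are that (a) each user's trajectory is an independent first-order time-homogeneous Markov chain, so the users' transitions factorize and no cross-user temporal dependence exists, and (b) the adversary conditions on the full knowledge $D_\mathcal{K}$ of the non-target users. Under these, the factors corresponding to the known users $j \neq i$ either cancel or are fixed constants from the adversary's viewpoint (their present values are known, not inferred), leaving only the factor that governs the victim's transition. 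This collapses $\Pr(D^t \mid D^{t-1})$ to $\Pr(l_i^t \mid l_i^{t-1})$.

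Next I would identify $\Pr(l_i^t \mid l_i^{t-1})$ with the appropriate entry of $P_i^F$, directly by Definition \ref{def:temporal_corr}, which states that $P_i^F$ represents exactly $\Pr(l_i^t \mid l_i^{t-1})$. Since $A_i^\mathcal{T}(\emptyset, P_i^F)$ is assumed to possess $P_i^F$ by Definition \ref{def:adversary_t}, the adversary can read off this quantity, completing the derivation.

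The main obstacle I expect is justifying the factorization and the cancellation of the non-target users cleanly, rather than the Markov step itself. Specifically, one must argue carefully that conditioning on $D_\mathcal{K}^t$ (the \emph{present} values of the other users) does not leak additional information about $l_i^t$ beyond what $l_i^{t-1}$ already provides — this relies on the assumption that users' chains are mutually independent, so that $l_i^t \perp D_\mathcal{K}^t \mid l_i^{t-1}$. If the users' data were correlated across individuals, the reduction would fail; hence the argument hinges essentially on the per-user independence built into the Markov model, and I would state that assumption explicitly at the point where the other users' factors drop out.
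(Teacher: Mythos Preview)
Your proposal is correct and follows the natural line of argument. Note, however, that the paper explicitly omits the proofs of both Lemma~\ref{lem:tpl_b} and Lemma~\ref{lem:tpl_f} ``due to space limitations,'' so there is no published proof to compare against; your sketch---decompose $D^t=\{l_i^t\}\cup D_\mathcal{K}^t$, use the per-user independence of the Markov chains together with the adversary's exact knowledge of $D_\mathcal{K}^t$ to collapse the non-target factors, and then read off $\Pr(l_i^t\mid l_i^{t-1})$ from $P_i^F$---is precisely the argument the authors signal by presenting the lemma as the forward-time mirror of Lemma~\ref{lem:tpl_b}.
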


\vspace{-5pt}
\noindent
We omit the proofs of the lemmas due to space limitations.

\vspace{-4pt}
\subsection{Temporal Privacy Leakage}
\label{subsec:tpl} 
We now define the privacy leakage w.r.t. adversary$ _\mathcal{T} $.
For the convenience of analysis, let us assume the length of release time\footnote{In this paper, we do not need to know the length of release time in advance.} is $ T $.
The adversary {\small $ A_i^\mathcal{T} $} observes the differentially private outputs $ \bm{r}^1,\ldots,\bm{r}^t,\ldots,\bm{r}^T $ and attempts to infer the value of user $ i $'s data at time $ t $, namely $ l_i^t $.
Similar to Definition \ref{def:pl}, we define the privacy leakage in terms of event-level differential privacy in the context of continual data release as described in Section \ref{sec:problem_setting}.

\begin{definition}[Temporal Privacy Leakage, TPL]
\label{def:tpl}
Let {\small $ {D^t}' $} be a neighboring database of {\small $ D^t $}.
Let {\small $D_{\cal K}^t $} be the tuple knowledge of {\small $ A_i^\mathcal{T} $}.
We have {\footnotesize $ {D^t}'=D_{\cal K}^t  \cup \{{l_i^t}\} $} and {\footnotesize $ {D^t}'=D_{\cal K}^t  \cup \{{l_i^t}'\} $} where $l_i^t  $ and $ {l_i^t}' $ are two different values of user $ i $'s data at time $ t $.
Temporal Privacy Leakage (TPL)  of {\small $ \mathcal{M}^t $} w.r.t. a single {\small $ A_i^\mathcal{T} $}  and all {\small $ A_i^\mathcal{T}, i\in \bm{U}$}  are defined,  respectively, as follows. 
\begin{myAlignSSS}
{\textit{TPL}}&(A_i^\mathcal{T} ,\mathcal{M}^t) &\eqdef &
\sup_{\substack{ l_i^t, {l_i^t}', \bm{r}^1, \ldots, \bm{r}^T} } \log
{{\Pr ({\bm{r}^1}, \ldots, {\bm{r}^{T}}|{l_i^t,D_\mathcal{K}^t})} \over {\Pr ({\bm{r}^{1}}, \ldots ,{\bm{r}^{T}}|{l_i^t}',{D_\mathcal{K}^t})}}. \label{eq:tpl1} 
\\
{\textit{TPL}}&(\mathcal{M}^t) &\eqdef &\max_{\forall A_i^\mathcal{T}, i\in \bm{U}} {TPL}(A_i^\mathcal{T} ,\mathcal{M}^t) \label{eq:tpl2}
\\
&&=& \sup_{\substack{ D^t, {D^t}', \bm{r}^1, \ldots, \bm{r}^T} } \log
{{\Pr ({\bm{r}^1}, \ldots, {\bm{r}^{T}}|{D^t})} \over {\Pr ({\bm{r}^{1}}, \ldots ,{\bm{r}^{T}}|{D^t}')}}. \label{eq:tpl3}
\end{myAlignSSS}
\end{definition}

\vspace{-8pt}
We first analyze {\small $ {\textit{TPL}}(A_i^\mathcal{T} ,\mathcal{M}^t) $} (i.e., Equation \eqref{eq:tpl1}) because it is key to solve Equation \eqref{eq:tpl2} or \eqref{eq:tpl3}.
We can rewrite  {\small $ {\textit{TPL}}(A_i^\mathcal{T} ,\mathcal{M}^t) $} as follows because $ \bm{r}^1,\ldots,\bm{r}^T $ are published independently by differentially private mechanism {\small $ \mathcal{M}^1,\ldots,\mathcal{M}^T $}.
\begin{myAlignSSS}
&\text{Eqn.}\eqref{eq:tpl1}=\sup_{\substack{ l^t, {l_i^t}', \bm{r}^1, \ldots, \bm{r}^T} }
 \log \frac{\Pr(\bm{r}^1|{l_i^t,D_\mathcal{K}^t})}{\Pr({\bm{r}^1|{{l_i^t}',D_\mathcal{K}^t}})} *\dots * \frac{\Pr(\bm{r}^T|{l_i^t,D_\mathcal{K}^t})}{\Pr({\bm{r}^T|{{l_i^t}',D_\mathcal{K}^t}})} 
 \nonumber \\
&=
\underbrace{
\sup_{\substack{\bm{r}^1,...,\bm{r}^t,\\{l_i^t},{l_i^t}'}} 
\log \frac{\Pr(\bm{r}^1,...,\bm{r}^t|{l_i^t,D_\mathcal{K}^t})}{\Pr({\bm{r}^1,...,\bm{r}^t|{{l_i^t}',D_\mathcal{K}^t}})}
}_{\text{{\footnotesize backward privacy leakage}}}
+
\underbrace{
 \sup_{\substack{\bm{r}^t,...,\bm{r}^T,\\ {l_i^t},{l_i^t}'}} 
 \log \frac{\Pr(\bm{r}^t,...,\bm{r}^T|{l_i^t,D_\mathcal{K}^t})}{\Pr(\bm{r}^t,...,{\bm{r}^T|{{l_i^t}',D_\mathcal{K}^t}})}
}_{\text{{\footnotesize forward privacy leakage}}}  \nonumber\\
&-
\underbrace{
\sup_{\bm{r}^t,{l_i^t},{l_i^t}'} 
\log \frac{\Pr(\bm{r}^t|{l_i^t,D_\mathcal{K}^t})}{\Pr({\bm{r}^t|{{l_i^t}',D_\mathcal{K}^t}})}
}_{\textit{PL}_0(A_i^\mathcal{T},\mathcal{M}^t)}  \label{eq:tpl_expanded}
\end{myAlignSSS}

\vspace{-7pt}
It is clear that {\footnotesize $ \textit{PL}_0(A_i^\mathcal{T},\mathcal{M}^t)  = \textit{PL}_0(A_i,\mathcal{M}^t) $} because {\small $ \textit{PL}_0 $}  indicates the privacy leakage w.r.t. one output $ \bm{r} $ (refer to Definition \ref{def:pl}). 
As annotated in the above equation, we define backward and forward privacy leakage as follows.

\vspace{2pt}
\begin{definition}[Backward Privacy Leakage, BPL]
The privacy leakage of {\footnotesize $ \mathcal{M}^t $} caused by {\footnotesize $ \bm{r}^1,...,\bm{r}^t $} w.r.t. {\footnotesize $ A_i^\mathcal{T} $} is called backward privacy leakage, defined as follows.
\begin{myAlignSS}
{\textit{BPL}}&(A_i^\mathcal{T} ,\mathcal{M}^t) &\eqdef &
\sup_{\substack{ l_i^t, {l_i^t}', \bm{r}^1, \ldots, \bm{r}^t} } \log
{{\Pr ({\bm{r}^1}, \ldots, {\bm{r}^{t}}|{l_i^t,D_\mathcal{K}^t})} \over {\Pr ({\bm{r}^{1}}, \ldots ,{\bm{r}^{t}}|{l_i^t}',{D_\mathcal{K}^t})}}. \label{eq:bpl1} 
\\
{\textit{BPL}}&(\mathcal{M}^t) &\eqdef & \max_{\forall A_i^\mathcal{T},i \in \bm{U}} {\textit{BPL}}(A_i^\mathcal{T} ,\mathcal{M}^t) . \label{eq:bpl2} 
\end{myAlignSS}
\end{definition}

\vspace{-8pt}
\begin{definition}[Forward Privacy Leakage, FPL]
The privacy leakage of {\small $ \mathcal{M}^t $} caused by {\small $ \bm{r}^t,...,\bm{r}^T$} w.r.t. {\small $ A_i^\mathcal{T} $} is called forward privacy leakage, defined by follows.
\vspace{-3pt}
\begin{myAlignSS}
{\textit{FPL}}&(A_i^\mathcal{T} ,\mathcal{M}^t) &\eqdef &
\sup_{\substack{ l_i^t, {l_i^t}', \bm{r}^t, \ldots, \bm{r}^T} } \log
{{\Pr ({\bm{r}^t}, \ldots, {\bm{r}^{T}}|{l_i^t,D_\mathcal{K}^t})} \over {\Pr ({\bm{r}^{t}}, \ldots ,{\bm{r}^{T}}|{l_i^t}',{D_\mathcal{K}^t})}}. \label{eq:fpl1} 
\\
{\textit{FPL}}&(\mathcal{M}^t) &\eqdef & \max_{\forall A_i^\mathcal{T},i \in \bm{U}} {\textit{FPL}}(A_i^\mathcal{T} ,\mathcal{M}^t) . \label{eq:fpl2} 
\end{myAlignSS}
\end{definition}
\vspace{-8pt}
\noindent
By substituting Equation \eqref{eq:bpl1} and \eqref{eq:fpl1} into \eqref{eq:tpl_expanded}, we have
\begin{myAlignSSS}
\textit{TPL}(A_i^\mathcal{T} ,\mathcal{M}^t)
=
\textit{BPL}(A_i^\mathcal{T} ,\mathcal{M}^t)
+
\textit{FPL}(A_i^\mathcal{T} ,\mathcal{M}^t)
-
\textit{PL}_0( A_i^\mathcal{T} ,\mathcal{M}^t). \label{eq:tpl_comp1}
\end{myAlignSSS}

\vspace{-8pt}
\noindent
Similarly, by expanding Equation \eqref{eq:tpl3} to one resembling Equation \eqref{eq:tpl_expanded} and combining it with  Equation \eqref{eq:bpl2} and \eqref{eq:fpl2}, we have

\begin{myAlignS}
\textit{TPL}(\mathcal{M}^t)
=
\textit{BPL}(\mathcal{M}^t)
+
\textit{FPL}(\mathcal{M}^t)
-
\textit{PL}_0(\mathcal{M}^t). \label{eq:tpl_comp2}
\end{myAlignS}

\vspace{-15pt}
\noindent
Intuitively, BPL and FPL are the privacy leakage w.r.t. the adversaries {\footnotesize $ A_i^{\mathcal{T}}(P_i^B)$ } and {\footnotesize $ A_i^{\mathcal{T}}(P_i^F)$ }, respectively.
TPL is the privacy leakage w.r.t. {\footnotesize $ A_i^{\mathcal{T}}(P_i^B, P_i^F)$}.
In Equation \eqref{eq:tpl_comp2}, we need to minus {\footnotesize $ PL_0(\mathcal{M}^t) $} because it is counted in both BPL and FPL.
We will show more details in the following analysis.

\textbf{BPL over time.}
For BPL, we first expand and simplify Equation \eqref{eq:bpl1} by Bayesian theorem and Lemma \ref{lem:tpl_b},  {\small \textit{BPL}$(A_i^\mathcal{T}, \mathcal{M}^t) $} is equal to
\begin{myAlignSSS}
&
\sup_{\substack{ l_i^t,{l_i^t}',  \\ \bm{r}^1,\ldots,\bm{r}^{t-1}} }  \log
{
{\sum_{l_i^{t-1}} \Pr ({\bm{r}^1}, \ldots ,{\bm{r}^{t-1}}|{l_i^{t-1}},D_\mathcal{K}^{t-1}) \Pr(l_i^{t-1}|{l_i^{t}})   }
\over 
{\sum_{{l_i^{t-1'}}} 
\underbrace{ \Pr ({\bm{r}^1},\ldots, {\bm{r}^{t-1}}|{l_i^{t-1}}',D_\mathcal{K}^{t-1})}_{ {\text{(i) } {\textit{BPL}}}(A_i^\mathcal{T}, \mathcal{M}^{t-1})} 
\underbrace{\Pr({l_i^{t-1'}}|{l_i^{t}}')}_{\text{(ii) } P_i^B} } 
} \nonumber \\
&+ \sup_{\substack{ l_i^t,{l_i^t}', \bm{r}^t} }  \log \frac{\Pr({\bm{r}^{t}}|{l_i^{t}},D_\mathcal{K}^{t})}{\underbrace{\Pr({\bm{r}^{t}}|{l_i^{t}}',D_\mathcal{K}^{t})}_{\text{(iii) } \textit{PL}_0(A_i^\mathcal{T},\mathcal{M}^{t})}}.  \label{eq:bpl_cal}
\end{myAlignSSS}

\vspace{-5pt}
\noindent
We now discuss the three annotated terms in the above equation.
The first term indicates BPL at the previous time {\small $ t-1 $}, the second term is the backward temporal correlation determined by $ P_i^B $, and the third term is equal to the privacy leakage w.r.t. adversaries in traditional DP (see Definition \ref{def:pl}).
Hence, BPL at time $ t $ depends on (i) BPL at time $ t-1 $, (ii) the backward temporal correlations, and (iii) the (traditional) privacy leakage of {\small $ \mathcal{M}^t $} (which is related to the privacy budget allocated to {\small $ \mathcal{M}^t $}).
By Equation \eqref{eq:bpl_cal}, we know that 
if {\small $ t=1 $},  {\small $  \textit{BPL}(A_i^\mathcal{T},\mathcal{M}^1 ) = \textit{PL}_0(A_i,\mathcal{M}^1)$};
if {\small $ t>1 $}, we have the following, where {\small $ \mathcal{L}^B(\cdot) $} is a \textit{backward temporal privacy loss function} for calculating the accumulated privacy loss.

\begin{myAlignS}
\textit{BPL}(A_i^\mathcal{T}, \mathcal{M}^t ) = \mathcal{L}^B \big( \textit{BPL}(A_i^\mathcal{T}, \mathcal{M}^{t-1}) \big)+\textit{PL}_0(A_i, \mathcal{M}^{t}) \label{eq:bpl_f}
\end{myAlignS}

\vspace{-15pt}
\noindent Equation \eqref{eq:bpl_f} reveals that the {\small BPL} is calculated recursively and may \textit{accumulate over time}, as shown in Example \ref{example:bpl} (Fig.\ref{fig:tpl}(a)).

\begin{example}[\textbf{BPL due to previous releases}]
\label{example:bpl}
Suppose that a DP mechanism {\small $ \mathcal{M}^t$} satisfies {\small $ \textit{PL}_0(\mathcal{M}^t)=0.1 $} for each time $ t\in[1,T] $, i.e., 0.1-DP at each time point.
We now discuss BPL at each time point w.r.t. {\footnotesize $ A_i^\mathcal{T} $} with knowledge of backward temporal correlations {\footnotesize $P_{i}^B$}.
In an extreme case, if {\footnotesize $ P_i^B  $} indicates the strongest correlation, say,  {\footnotesize $ P_i^B=\big(\begin{smallmatrix} 1&0\\ 0&1\end{smallmatrix} \big)$}, then, at time {\small $t $}, {\footnotesize $ A_i^\mathcal{T}$} 
knows {\footnotesize $ l_i^{t} =l_i^{t-1}=\cdots=l_i^1$}, i.e., {\footnotesize $ D^{t} =D^{t-1}=\cdots=D^1$} because of  {\footnotesize $ D^t=\{l_i^t\} \cup D_\mathcal{K}^t $} for any {\footnotesize $ t \in [1,T] $}.
Hence, the continuous data release {\footnotesize $ \bm{r}^1,\ldots,\bm{r}^t $} is equivalent to releasing the \textit{same} database multiple times; the privacy leakage at each time point will accumulate from previous time points and increase linearly (Figure \ref{fig:tpl}(a)(i)).
In another extreme case, if there is no backward temporal correlation that is known to {\footnotesize $ A_i^\mathcal{T} $} (e.g., for the $ A_i $ in Definition \ref{def:pl} or {\footnotesize $ A_i^\mathcal{T}(P_i^F) $}\footnote{In this case, given the current $ l_i^t $ and $ P_i^F $, i.e., $ \Pr(l_i^{t}|l_i^{t-1}) $, the adversary cannot derive  {\footnotesize $ l_i^{t} =l_i^{t-1}=\cdots=l_i^1$}.}),
the backward privacy leakage at each time point is {\footnotesize $ \textit{PL}_0(\mathcal{M}^t) $}, as shown in Figure \ref{fig:tpl}(a)(iii).
Figure \ref{fig:tpl}(a)(ii) depicts the backward privacy leakage caused by {\small $ P_i^B=\big(\begin{smallmatrix} 0.8&0.2\\ 0&1\end{smallmatrix} \big)$}, which can be finely quantified using our method (Algorithm \ref{algo:cal_bpl})  in Section \ref{sec:calculate_tpl}.
\end{example}

\textbf{FPL over time.}
For FPL,  similar to the analysis of BPL,  we expand and simplify Equation \eqref{eq:bpl1} by Bayesian theorem and Lemma \ref{lem:tpl_f},  {\small \textit{FPL}$(A_i^\mathcal{T}, \mathcal{M}^t) $} is equal to

\begin{myAlignSSS}
&
\sup_{\substack{ l_i^t,{l_i^t}',  \\ \bm{r}^{t+1},\ldots,\bm{r}^{T}} }  \log
{
{\sum_{l_i^{t+1}} \Pr (\bm{r}^{t+1}, \ldots ,{\bm{r}^{T}}|{l_i^{t+1}},D_\mathcal{K}^{t+1}) \Pr(l_i^{t+1}|{l_i^{t}})   }
\over 
{\sum_{{l_i^{t+1'}}} 
\underbrace{ \Pr ({\bm{r}^{t+1}},\ldots, {\bm{r}^{T}}|{l_i^{t+1}}',D_\mathcal{K}^{t+1})}_{ {\text{(i) } {\textit{FPL}}}(A_i^\mathcal{T}, \mathcal{M}^{t+1})} 
\underbrace{\Pr({l_i^{t+1'}}|{l_i^{t}}')}_{\text{(ii) } P_i^F} } 
} \nonumber \\
&+ \sup_{\substack{ l_i^t,{l_i^t}', \bm{r}^t} }  \log \frac{\Pr({\bm{r}^{t}}|{l_i^{t}},D_\mathcal{K}^{t})}{\underbrace{\Pr({\bm{r}^{t}}|{l_i^{t}}',D_\mathcal{K}^{t})}_{\text{(iii) } \textit{PL}_0(A_i^\mathcal{T},\mathcal{M}^{t})}}.  \label{eq:fpl_cal}
\end{myAlignSSS}
\vspace{-5pt}
\noindent
By Equation \eqref{eq:fpl_cal}, we know that 
if {\small $ t=T $},  {\footnotesize $  \textit{FPL}(A_i^\mathcal{T},\mathcal{M}^T ) = \textit{PL}_0(A_i,\mathcal{M}^T)$};
if {\small $ t<T $}, we have the following, where {\small $ \mathcal{L}^F(\cdot) $} is a \textit{forward temporal privacy loss function} for calculating the increased privacy loss due to FPL at the next time.

\begin{myAlignS}
\textit{FPL}(A_i^\mathcal{T}, \mathcal{M}^t ) = \mathcal{L}^F \big( \textit{FPL}(A_i^\mathcal{T}, \mathcal{M}^{t+1}) \big)+\textit{PL}_0(A_i, \mathcal{M}^{t}) \label{eq:fpl_f}
\end{myAlignS}

\vspace{-8pt}
\noindent Equation \eqref{eq:fpl_f} reveals that FPL is calculated recursively and may \textit{increase over time}, as shown in Example \ref{example:fpl} (Fig.\ref{fig:tpl}(b)).

\begin{example}[\textbf{FPL due to future releases}]
\label{example:fpl}
Considering the same setting in Example \ref{example:bpl}, we now discuss FPL at each time point w.r.t. {\footnotesize $ A_i^\mathcal{T} $} with knowledge of forward temporal correlations {\footnotesize $P_{i}^F$}.
In an extreme case, if {\footnotesize $ P_i^F  $} indicates the strongest correlation, say,  {\footnotesize $ P_i^F=\big(\begin{smallmatrix} 1&0\\ 0&1\end{smallmatrix} \big)$}, then, at time {\small $t $}, {\footnotesize $ A_i^\mathcal{T}$} 
knows {\footnotesize $ l_i^{t} =l_i^{t+1}=\cdots=l_i^T$}, i.e., {\footnotesize $ D^{t} =D^{t+1}=\cdots=D^T$} because of  {\footnotesize $ D^t=\{l_i^t\} \cup D_\mathcal{K}^t $} for any {\footnotesize $ t \in [1,T] $}.
Hence, the continuous data release {\footnotesize $ \bm{r}^t,\ldots,\bm{r}^T $} is equivalent to releasing the same database multiple  times; 
the privacy leakage at time $ t $ will increase when every time new release (i.e., {\footnotesize $ \bm{r}^{t+1} $},{\footnotesize $ \bm{r}^{t+2} $},...) happens.
For example,
we see that contrary to BPL, the FPL at time 1 is the highest (due to future releases at time 1 to 10) while FPL at time 10 is the lowest (since there is no future release with respect to time 10 yet). 
 \ul{When $ \bm{r}^{11} $ is released, all FPL at time $ t \in [1,10] $ will be updated}.
In another extreme case, if there is no forward temporal correlation that is known to {\footnotesize $ A_i^\mathcal{T} $} (e.g., for the $ A_i $ in Definition \ref{def:pl} or {\footnotesize $ A_i^\mathcal{T}(P_i^B) $}),
then the forward privacy leakage at each time point is {\footnotesize $ \textit{PL}_0(\mathcal{M}^t) $}, as shown in Figure \ref{fig:tpl}(b)(iii).
Figure \ref{fig:tpl}(b)(ii) depicts the forward privacy leakage caused by {\small $ P_i^F=\big(\begin{smallmatrix} 0.8&0.2\\ 0&1\end{smallmatrix} \big)$}, which can be finely quantified using our method (Algorithm \ref{algo:cal_bpl}) in Section \ref{sec:calculate_tpl}.
\end{example}

\vspace{2pt}
\begin{remark}
\label{remark}
The extreme cases shown in Example \ref{example:bpl} and \ref{example:fpl}  are the upper and lower bound  of BPL and FPL.
Hence, the backward temporal privacy loss function {\small $ \mathcal{L}^B(\cdot) $} in Equation \eqref{eq:bpl_f} and the forward temporal privacy loss function {\small $ \mathcal{L}^F(\cdot) $} in Equation \eqref{eq:fpl_f} satisfy  {\small $ 0 *\cdot \leq \mathcal{L}^B(\cdot)  \leq  1* \cdot$}, where $ \cdot $ is  BPL at the previous time, and {\small $ 0 *\cdot \leq \mathcal{L}^F(\cdot)  \leq  1* \cdot$}, where $ \cdot $ is FPL at the next time, respectively.
\end{remark}

\vspace{-5pt}

From Example \ref{example:bpl} and \ref{example:fpl}, we know that: backward temporal correlation (i.e.,{\footnotesize $ P_i^B $}) does not affect FPL, and forward temporal correlation (i.e.,{\footnotesize $ P_i^F $}) does not affect BPL. 
In other words, adversary {\footnotesize $A_i^\mathcal{T}(P_i^B)  $} only causes BPL;  {\footnotesize $A_i^\mathcal{T}(P_i^F)  $} only causes FPL; while  {\footnotesize $A_i^\mathcal{T}(P_i^B,P_i^F)  $} poses a risk on both BPL and FPL.

Figure \ref{fig:tpl}(c) shows TPL, which is calculated using BPL and FPL (see Equation \eqref{eq:tpl_comp2}).
Given {\footnotesize $P_i^B $} and {\footnotesize $P_i^F  $}, finely quantifying TPL is a challenge.
We will design a novel algorithm to calculate them efficiently in Section \ref{sec:calculate_tpl}.

\vspace{-5pt}

\subsection{DP under Temporal Correlations and Its Composability}
\label{subsec:comp}
In this section, we define $ \alpha $-DP$ _\mathcal{T}$  to provide a privacy guarantee against temporal privacy leakage.
We prove its sequential composition theorem and discuss the connection between $ \alpha $-DP$ _\mathcal{T}$ and  $ \epsilon $-DP in terms of event-level/user-level privacy\cite{dwork_differential_2010-2}\cite{dwork_differential_2010} and w-event privacy\cite{kellaris_differentially_2014}.

\vspace{6pt}
\begin{definition}[$ \alpha \text{-} DP_\mathcal{T} $]
\label{def:DP}
For all user $ i $ in the database, if TPL of {\small $\mathcal{M}^t$} (see Definition \ref{def:tpl}) is less than or equal to $ \alpha $, we say that {\small $\mathcal{M}^t$} satisfies $ \alpha$-differential privacy under temporal correlation, denoted by {\small $ \alpha \text{-} DP_\mathcal{T} $}.
\end{definition}


\vspace{-5pt}
\noindent
{\small DP$_\mathcal{T} $} is an enhanced version of DP on temporal data.
If the data are temporally independent (i.e., for all user $ i $, both {\small $ P_i^B $} and {\small $ P_i^F $} are $ \emptyset $), an $ \epsilon $-DP  mechanism satisfies  $ \epsilon $-DP$_\mathcal{T} $.
If the data are temporally correlated (i.e., existing user $ i $ whose  {\small $ P_i^B $} or {\small $ P_i^F $} is not $ \emptyset $), an $ \epsilon $-DP  mechanism satisfies  $ {\alpha}$-DP$_\mathcal{T} $ where $ {\alpha}$ is the increased privacy leakage and can be quantified using our framework.

One may wonder, for a sequence of DP$_\mathcal{T} $ mechanisms on the timeline, what is the overall privacy guarantee.
Suppose that {\footnotesize $\mathcal{M}^t $}  satisfies $ \epsilon_t $-DP and poses risks of BPL and FPL as {\footnotesize $ \alpha_t^B $} and {\footnotesize $ \alpha_t^F $}, respectively.
That is, {\footnotesize $\mathcal{M}^t $} satisfies  {\footnotesize $ ( \alpha_t^B + \alpha_t^F - \epsilon_t )$}-DP$ _\mathcal{T} $ at time $ t $ according to Equation \eqref{eq:tpl_comp2}.
We formally define such overall privacy leakage based on Equation \eqref{eq:tpl3}.
\begin{definition}[TPL of a sequence of DP mechanisms]
\label{def:combined_pl}
The temporal privacy leakage of DP mechanisms {\footnotesize $ \{\mathcal{M}^t,\ldots,\mathcal{M}^{t+j}\} $} where $ j \geq 0 $ is defined as follows.
\begin{myAlignSSS}
\textit{TPL}\big( \{\mathcal{M}^t,\ldots,\mathcal{M}^{t+j}\} \big) \eqdef
\sup_{\substack{ D^t,...,D^{t+j},\\ {D^t}',...,{D^{t+j}}',\\ \bm{r}^1, \ldots, \bm{r}^T} } \log
{{\Pr ({\bm{r}^1}, \ldots, {\bm{r}^{T}}|{D^t,\ldots,D^{t+j}})} \over {\Pr ({\bm{r}^{1}}, \ldots ,{\bm{r}^{T}}|{D^t}',\ldots, {D^{t+j}}')}} \nonumber 
\end{myAlignSSS}
\end{definition}
\vspace{-10pt}
It is easy to see that, if {\footnotesize $ j=0 $}, it is event-level privacy; if {\footnotesize $ t=1 $} and {\footnotesize $ j=|T-1| $}, it is user-level privacy.

\begin{theorem}[Composition under Temporal Correlations]
\label{thm:composition}
A sequence of DP mechanism {\footnotesize $\{\mathcal{M}^t,\ldots, \mathcal{M}^{t+j} \}$} satisfies 
\begin{myAlignSS}
\begin{cases}
{\scriptsize (\alpha_{t}^B + \alpha_{t+1}^F )}\textit{-DP}_\mathcal{T}  & j=1 \\
{\scriptsize \big(\alpha_{t}^B + \alpha_{t+j}^F + \sum_{k=1}^{k=j-1}\epsilon_{t+k} \big)}\textit{-DP}_\mathcal{T}  &   j \geq 2 
\end{cases}
\end{myAlignSS}
\end{theorem}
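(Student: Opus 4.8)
The plan is to expand the sequence TPL from Definition \ref{def:combined_pl} and exploit the conditional independence structure coming from (a) the independent release of $\bm{r}^1,\ldots,\bm{r}^T$ across time and (b) the first-order Markov property of the temporal correlations. Fix the targeted user $i$ together with its tuple knowledge $D_\mathcal{K}$, so that changing $D^s$ to ${D^s}'$ means changing only $l_i^s$ to ${l_i^s}'$ at each $s\in[t,t+j]$. Conditioning on the window values $l_i^t,\ldots,l_i^{t+j}$, I would marginalize over the unobserved values $l_i^s$ with $s<t$ and $s>t+j$. Because the chain is Markov, the past values $l_i^1,\ldots,l_i^{t-1}$ couple to the window only through its left boundary $l_i^t$, and the future values $l_i^{t+j+1},\ldots,l_i^T$ only through the right boundary $l_i^{t+j}$. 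This factors the joint likelihood into a product of three terms: a past factor $f_{\text{past}}(l_i^t)$ assembled from $\bm{r}^1,\ldots,\bm{r}^{t-1}$, a window factor $\prod_{s=t}^{t+j}\Pr(\bm{r}^s\mid l_i^s,D_\mathcal{K}^s)$, and a future factor $f_{\text{fut}}(l_i^{t+j})$ assembled from $\bm{r}^{t+j+1},\ldots,\bm{r}^T$.

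Taking the log-ratio of the numerator (using $l_i^t,\ldots,l_i^{t+j}$) against the denominator (using the neighbors ${l_i^t}',\ldots,{l_i^{t+j}}'$) converts the product into a sum of three log-ratios. Since a supremum of a sum is at most the sum of suprema, I would bound each piece separately. The window piece is a sum over $s\in[t,t+j]$ of single-release log-ratios, each at most $\textit{PL}_0(A_i,\mathcal{M}^s)$, giving $\sum_{s=t}^{t+j}\textit{PL}_0(A_i,\mathcal{M}^s)$. The past piece is exactly the accumulated-backward term appearing in the first summand of Equation \eqref{eq:bpl_cal}, i.e. $\textit{BPL}(A_i^\mathcal{T},\mathcal{M}^t)$ with its time-$t$ contribution stripped off, so its supremum is $\textit{BPL}(A_i^\mathcal{T},\mathcal{M}^t)-\textit{PL}_0(A_i,\mathcal{M}^t)$; symmetrically, by the analogue of Equation \eqref{eq:fpl_cal}, the future piece equals $\textit{FPL}(A_i^\mathcal{T},\mathcal{M}^{t+j})-\textit{PL}_0(A_i,\mathcal{M}^{t+j})$.

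Summing the three per-user bounds gives $\big(\textit{BPL}(A_i^\mathcal{T},\mathcal{M}^t)-\textit{PL}_0(A_i,\mathcal{M}^t)\big)+\sum_{s=t}^{t+j}\textit{PL}_0(A_i,\mathcal{M}^s)+\big(\textit{FPL}(A_i^\mathcal{T},\mathcal{M}^{t+j})-\textit{PL}_0(A_i,\mathcal{M}^{t+j})\big)$, in which the boundary terms $\textit{PL}_0(A_i,\mathcal{M}^t)$ and $\textit{PL}_0(A_i,\mathcal{M}^{t+j})$ cancel against their copies inside the window sum, leaving $\textit{BPL}(A_i^\mathcal{T},\mathcal{M}^t)+\textit{FPL}(A_i^\mathcal{T},\mathcal{M}^{t+j})+\sum_{k=1}^{j-1}\textit{PL}_0(A_i,\mathcal{M}^{t+k})$. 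Performing this cancellation per user first and only then taking the maximum over the targeted $i$ (which is what the sup over databases in Definition \ref{def:combined_pl} selects) yields $\alpha_t^B+\alpha_{t+j}^F+\sum_{k=1}^{j-1}\epsilon_{t+k}$, since $\textit{BPL}(\mathcal{M}^t)=\alpha_t^B$, $\textit{FPL}(\mathcal{M}^{t+j})=\alpha_{t+j}^F$, and $\textit{PL}_0(\mathcal{M}^{t+k})=\epsilon_{t+k}$. For $j=1$ the interior sum is empty and this collapses to $\alpha_t^B+\alpha_{t+1}^F$, matching both cases of the statement; invoking Definition \ref{def:DP} then gives the claimed $\textit{DP}_\mathcal{T}$ level.

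I expect the main obstacle to be the bookkeeping at the two boundary times rather than any deep inequality. Concretely, the outputs $\bm{r}^t$ and $\bm{r}^{t+j}$ genuinely belong to the window factor, yet $\alpha_t^B$ and $\alpha_{t+j}^F$ already fold in their single-release leakage; the crux is to argue cleanly that the past and future factors see only $\bm{r}^1,\ldots,\bm{r}^{t-1}$ and $\bm{r}^{t+j+1},\ldots,\bm{r}^T$, so that subtracting $\epsilon_t$ and $\epsilon_{t+j}$ exactly prevents double-counting --- the same correction that forces the $-\textit{PL}_0$ term in Equation \eqref{eq:tpl_comp2}. A secondary point to state explicitly is the factorization itself: I must verify that conditioning on the \emph{entire} window $l_i^t,\ldots,l_i^{t+j}$, not merely its boundaries, still leaves the past and future blocks conditionally independent, which is immediate from the Markov property but deserves to be spelled out.
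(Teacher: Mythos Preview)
The paper omits the proof of this theorem due to space limitations, so there is no explicit argument to compare against; that said, your proposal is correct and is precisely the natural continuation of the decomposition the paper already carries out for the single-time case in Equations~\eqref{eq:tpl_expanded}--\eqref{eq:fpl_f}. Your three-factor split (past / window / future) via the Markov property, followed by identifying the past factor with the first summand of Equation~\eqref{eq:bpl_cal} (i.e., $\textit{BPL}-\textit{PL}_0$ at the left boundary) and the future factor with the analogous piece of Equation~\eqref{eq:fpl_cal} at the right boundary, is exactly what the paper's machinery is set up to deliver, and your boundary bookkeeping to avoid double-counting $\epsilon_t$ and $\epsilon_{t+j}$ mirrors the $-\textit{PL}_0$ correction in Equation~\eqref{eq:tpl_comp2}.
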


We omit the proofs of Theorems \ref{thm:composition} due to space limitations.

When {\small $ t=1 $} and {\small $ j=T-1 $} in Theorem \ref{thm:composition}, we have the following corollary because {\footnotesize $ \alpha_1^B=BPL(\mathcal{M}^1)=PL_0(\mathcal{M}^1) $} and {\footnotesize $ \alpha_T^F=FPL(\mathcal{M}^T)=PL_0(\mathcal{M}^T) $}.
\begin{corollary}
\label{col:user-level}
The temporal privacy leakage of a combined mechanism {\small $\{\mathcal{M}^1,\ldots, \mathcal{M}^{T} \}$} is  {\footnotesize $ \sum_{k=1}^{k=T}\epsilon_{k} $}. 
\end{corollary}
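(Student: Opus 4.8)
The plan is to obtain the corollary as a direct specialization of Theorem~\ref{thm:composition}. Setting $t=1$ and $j=T-1$ identifies the combined mechanism $\{\mathcal{M}^1,\ldots,\mathcal{M}^T\}$ with the sequence $\{\mathcal{M}^t,\ldots,\mathcal{M}^{t+j}\}$ appearing there, so it suffices to evaluate the $\text{-DP}_\mathcal{T}$ guarantee that the theorem delivers under this substitution and then simplify the resulting bound to $\sum_{k=1}^{k=T}\epsilon_k$. No new probabilistic reasoning is required; the entire content is already packaged in the composition theorem, and the task reduces to substitution and reindexing.

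First I would establish the two boundary identities quoted just before the statement, namely $\alpha_1^B = PL_0(\mathcal{M}^1)=\epsilon_1$ and $\alpha_T^F = PL_0(\mathcal{M}^T)=\epsilon_T$. These are precisely the base cases of the recursions in Equations~\eqref{eq:bpl_cal} and~\eqref{eq:fpl_cal}: at $t=1$ there is no earlier release, so the sum over $l_i^{t-1}$ in~\eqref{eq:bpl_cal} collapses and $BPL$ degenerates to $PL_0$; symmetrically, at $t=T$ there is no later release, so the sum over $l_i^{t+1}$ in~\eqref{eq:fpl_cal} collapses and $FPL$ degenerates to $PL_0$. Since $\mathcal{M}^1$ and $\mathcal{M}^T$ are assumed to satisfy $\epsilon_1$-DP and $\epsilon_T$-DP, these two leakages equal $\epsilon_1$ and $\epsilon_T$ respectively.

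Next I would carry out the arithmetic. For $T\geq 3$ the relevant case is $j=T-1\geq 2$, which gives $\alpha_1^B + \alpha_T^F + \sum_{k=1}^{k=T-2}\epsilon_{1+k}$; substituting the boundary identities and reindexing the middle sum as $\sum_{k=2}^{k=T-1}\epsilon_k$ yields $\epsilon_1 + \epsilon_T + \sum_{k=2}^{k=T-1}\epsilon_k = \sum_{k=1}^{k=T}\epsilon_k$. The small cases I would dispatch separately: for $T=2$ the $j=1$ branch gives $\alpha_1^B+\alpha_2^F=\epsilon_1+\epsilon_2$, and for $T=1$ the single-mechanism leakage reduces through Equation~\eqref{eq:tpl_comp2} to $BPL+FPL-PL_0 = PL_0(\mathcal{M}^1)=\epsilon_1$. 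The only real care needed, and hence the main (though modest) obstacle, is keeping track of which branch of Theorem~\ref{thm:composition} applies for small $T$, together with the off-by-one reindexing of the interior sum so that its endpoints $\epsilon_2,\ldots,\epsilon_{T-1}$ dovetail exactly with the two boundary terms $\epsilon_1$ and $\epsilon_T$; once that bookkeeping is done, the telescoped sum is immediate.
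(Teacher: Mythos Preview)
Your proposal is correct and follows exactly the paper's own route: it derives the corollary by specializing Theorem~\ref{thm:composition} with $t=1$, $j=T-1$ and invoking the boundary identities $\alpha_1^B=PL_0(\mathcal{M}^1)=\epsilon_1$ and $\alpha_T^F=PL_0(\mathcal{M}^T)=\epsilon_T$. You have simply spelled out the case split on $T$ and the reindexing more carefully than the paper does.
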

\noindent
It shows that temporal correlations do not affect the user-level privacy (i.e., protecting all the data on the timeline of each user), which is in line with the idea of group differential privacy: protecting all the correlated data in a bundle.

We now compare the privacy guarantee between DP and DP$ _\mathcal{T} $.
As we mentioned in Section \ref{sec:problem_setting}, there are typically two privacy notions in continuous data release: \textit{event-level} and \textit{user-level}
\cite{dwork_differential_2010-2}
\cite{dwork_differential_2010}.
Recently, $ w $-event privacy\cite{kellaris_differentially_2014} is proposed to merge the gap between event-level and user-level privacy.
It protects the data in any $ w $-length sliding window by utilizing the following sequential composition theorem of DP.

\vspace{2pt}
\begin{theorem}[Sequential composition on independent data\cite{mcsherry_privacy_2009}]
\label{thm:sequential_composition}
Suppose that {\footnotesize $ \mathcal{M}^t $} satisfies {\footnotesize $ \epsilon_t $}-DP for each {\footnotesize $ t\in[1,T] $}.
A combined mechanism {\footnotesize $\{\mathcal{M}^t,\ldots, \mathcal{M}^{t+j} \}$} satisfies {\footnotesize $ \sum_{k=1}^{k=j} \epsilon_{t+k} $}-DP.
\end{theorem}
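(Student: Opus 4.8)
The plan is to reduce the statement to the classical product argument for differential privacy, exploiting the fact (stated in the problem setting of Section~\ref{sec:problem_setting}) that each $\mathcal{M}^{t}$ draws fresh, independent randomness and operates only on its own snapshot $D^t$. First I would fix an arbitrary pair of neighboring database sequences $(D^t,\ldots,D^{t+j})$ and $((D^t)',\ldots,(D^{t+j})')$ together with an arbitrary output tuple $(\bm{r}^t,\ldots,\bm{r}^{t+j})$; these are exactly the quantities over which the supremum defining $\textit{PL}_0$ of the combined mechanism will be taken at the end.

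The central step is the factorization of the joint output law. Because $\mathcal{M}^{t+k}$ is applied to $D^{t+k}$ with independent noise, the conditional density of the whole release factors as
\[
\Pr(\bm{r}^t,\ldots,\bm{r}^{t+j}\mid D^t,\ldots,D^{t+j}) = \prod_{k}\Pr(\bm{r}^{t+k}\mid D^{t+k}).
\]
Forming the ratio for the two neighboring sequences and cancelling term by term then gives
\[
\frac{\Pr(\bm{r}^t,\ldots,\bm{r}^{t+j}\mid D^t,\ldots,D^{t+j})}{\Pr(\bm{r}^t,\ldots,\bm{r}^{t+j}\mid (D^t)',\ldots,(D^{t+j})')} = \prod_{k}\frac{\Pr(\bm{r}^{t+k}\mid D^{t+k})}{\Pr(\bm{r}^{t+k}\mid (D^{t+k})')}.
\]

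Next I would bound each factor individually. Since $D^{t+k}$ and $(D^{t+k})'$ are neighboring (or identical) at each time point, the $\epsilon_{t+k}$-DP guarantee of $\mathcal{M}^{t+k}$ from Definition~\ref{def:dp} bounds the $k$-th factor by $e^{\epsilon_{t+k}}$ (and by $1$ when the two snapshots coincide). Multiplying these bounds and taking logarithms yields that the log of the joint ratio is at most $\sum_{k}\epsilon_{t+k}$, uniformly in the chosen outputs and sequences. Taking the supremum over all outputs and all neighboring sequences shows that $\textit{PL}_0$ of the combined mechanism is at most $\sum_{k}\epsilon_{t+k}$, which by the characterization ``$\mathcal{M}$ satisfies $\epsilon$-DP iff $\textit{PL}_0(\mathcal{M})\le\epsilon$'' is precisely the claimed $\big(\sum_{k}\epsilon_{t+k}\big)$-DP.

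The only genuine subtlety, and the step I would treat most carefully, is the passage from a single neighboring relation on the sequence to a per-coordinate one: I must verify that the notion of neighboring used for the combined mechanism forces $D^{t+k}$ and $(D^{t+k})'$ to be either neighboring or equal at \emph{every} individual time $t+k$, so that the DP guarantee of each $\mathcal{M}^{t+k}$ applies factor by factor. Once that is settled the argument is purely multiplicative and carries no dependence on the Markov correlation model; this independence of the correlations is exactly why the bound here is the plain additive one, in contrast to the correlation-aware composition of Theorem~\ref{thm:composition}.
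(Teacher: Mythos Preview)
The paper does not actually prove this theorem: it is stated with a citation to McSherry~\cite{mcsherry_privacy_2009} as a known result and invoked only for the comparison in Table~\ref{tb:diff}. Your argument is the standard factorization-and-bound proof of sequential composition and is correct; the only caveat is that the paper's stated bound $\sum_{k=1}^{j}\epsilon_{t+k}$ appears to drop the $k=0$ term, whereas your product ranges over all $j{+}1$ mechanisms $\mathcal{M}^t,\ldots,\mathcal{M}^{t+j}$ and hence yields $\sum_{k=0}^{j}\epsilon_{t+k}$, which is what the classical theorem actually gives.
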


Suppose that $ \mathcal{M}^t $ satisfies $ \epsilon $-DP for each $ t \in[1,T]$.
According to Theorem \ref{thm:sequential_composition}, it achieves $ T\epsilon $-DP on user-level and $ w\epsilon $-DP on $ w $-event level.
We compare the privacy guarantee on independent data and temporally correlated data as follows.

\renewcommand{\arraystretch}{1.25}
\begin{table}[h]
\centering
\footnotesize
\caption{The privacy guarantee of $ \epsilon $-DP mechanisms.}
\begin{tabularx}{8.5cm}{|p{2.8cm}|p{1.4cm}|X|}
\hline  
\diagbox[width=3.2cm]{{Privacy Notion}}{{Data}} &  independent &  temporally correlated  \\
\hline event-level{ \cite{dwork_differential_2010-2}\cite{dwork_differential_2010}}& $ \epsilon $-DP &  ${\alpha} $-DP$ _\mathcal{T}$ ({\small ${\alpha} \geq\epsilon$})  \\
\hline $ w $-event\cite{kellaris_differentially_2014} & $w \epsilon $-DP & see Theorem \ref{thm:composition}  \\
\hline user-level{ \cite{dwork_differential_2010-2}\cite{dwork_differential_2010}} & $ T\epsilon $-DP & $ T\epsilon $-DP$ _\mathcal{T} $ (by Corollary \ref{col:user-level})\\ 
\hline 
\end{tabularx}
\label{tb:diff} 
\end{table}

\vspace{5pt}
It reveals that temporal correlations may blur the boundary between event-level privacy and user-level privacy.
In an extreme case, the temporal privacy leakage of an $ \epsilon $-DP mechanism on event-level can be $ T\epsilon $, i.e., $ T\epsilon $-DP$_\mathcal{T}$.
Consider the examples shown in Figure \ref{fig:tpl}.
Under the strongest temporal correlations, 
{\small $ \mathcal{M}^{10} $} satisfies $ 1 $-DP$_\mathcal{T}$ on event-level and a combined mechanism {\footnotesize $ \{\mathcal{M}^1,\ldots,\mathcal{M}^{10} \}$} also satisfies $ 1 $-DP$_\mathcal{T}$ on user-level.
Essentially, it is because the adversaries may infer {\small $\{ D^1,\ldots,D^T \}$} (user-level) from $ D^t $ (event-level) using temporal correlations.
 
 

\subsection{Discussion}
\label{subsec:discussion}
We make a few important observations regarding our privacy analysis.

First, the temporal privacy leakage is defined in a personalized way. 
That is, the privacy leakage may be different for users with distinct temporal patterns (i.e., {\footnotesize $ P_i^B $} and {\footnotesize $ P_i^F $}).
We define the overall temporal privacy leakage as the maximum one for all users, so that $ \alpha $-DP$ _\mathcal{T} $ is compatible with the traditional $ \epsilon $-DP mechanism (using one parameter to represent the overall privacy level) and we can convert a traditional DP mechanism  to bound the temporal privacy leakage.
On the other hand, our definitions  also can be compatible with personalized differential privacy (PDP) mechanisms\cite{jorgensen_conservative_2015}, in which the personalized privacy budgets, i.e., a vector {\footnotesize $ [\epsilon_1, \ldots,\epsilon_n]$}, are allocated to each user.
In other words, we can convert a PDP mechanism to bound the temporal privacy leakage for each user.

Second, in this paper, we focus on the temporally correlated data and assume that the adversary has knowledge of temporal correlations modeled by Markov chain.
However, it is possible that the adversary has knowledge about more sophisticated temporal correlation model or
other types of correlations, such as user-user correlations modeled by Gaussian Markov Random Field in \cite{yang_bayesian_2015}.
Our contributions in this work can serve as primitives for quantifying the privacy risk under more advanced adversarial knowledge.

\section{Calculating Temporal Privacy Leakage}
\label{sec:calculate_tpl}
In this section, we design algorithms for computing backward privacy leakage (BPL) and forward privacy leakage (FPL).
We first show that both of them  can be transformed to the optimal solution of a \textit{linear-fractional programming} problem\cite{bajalinov_linear-fractional_2003}.
Traditionally, this type of problem can be solved by simplex algorithm\cite{dantzig_linear_1998} in exponential time.
By exploiting the constraints in this problem, we then design a method to solve it in polynomial time.

\subsection{Problem formulation}
According to the privacy analysis of BPL and FPL in Section \ref{subsec:tpl}, we need to solve the backward  and forward temporal privacy loss functions $ \mathcal{L}^B(\cdot) $ and $ \mathcal{L}^F (\cdot)$ in Equations \eqref{eq:bpl_f} and \eqref{eq:fpl_f}, respectively.
By observing the structure of the first term in Equations \eqref{eq:bpl_cal} and \eqref{eq:fpl_cal}, we can see that the calculations for recursive functions $ \mathcal{L}^B (\cdot)$ and $ \mathcal{L}^F (\cdot)$ are virtually in the same way.
They calculate the increment of the input values (the previous BPL or the next FPL) based on temporal correlations (backward or forward).
Although different degree of correlations result in different privacy loss functions, the methods for analyzing them are the same.

We now quantitatively analyze the temporal privacy leakage.
In the following, we demonstrate the calculation of $ \mathcal{L}^B (\cdot)$.
 The first term of Equation \eqref{eq:bpl_cal}, i.e., {\footnotesize $ \mathcal{L}^B(BPL(A_i^\mathcal{T},\mathcal{M}^{t-1})) $}  is as follows.
\begin{myAlignSSS}
\sup_{\substack{ l_i^t,{l_i^t}',  \\ \bm{r}^1,\ldots,\bm{r}^{t-1}} }  \log
{
{\sum_{l_i^{t-1}} \Pr ({\bm{r}^1}, \ldots ,{\bm{r}^{t-1}}|{l_i^{t-1}},D_\mathcal{K}^{t-1}) \Pr(l_i^{t-1}|{l_i^{t}})   }
\over 
{\sum_{{l_i^{t-1'}}} 
\underbrace{ \Pr ({\bm{r}^1},\ldots, {\bm{r}^{t-1}}|{l_i^{t-1}}',D_\mathcal{K}^{t-1})}_{ { {\textit{BPL}}}(A_i^\mathcal{T}, \mathcal{M}^{t-1})} 
\underbrace{\Pr({l_i^{t-1'}}|{l_i^{t}}')}_{P_i^B} } 
} \label{eq:bpl_cal_1st}
\end{myAlignSSS}

\vspace{-5pt}
We now simplify the notations in the above formula.
Let two arbitrary (different) rows in $ P_i^B $ be vectors {\footnotesize $ \bm{q}=(q_1,...,q_n)$} and {\footnotesize $\bm{d}=(d_1,...,d_{n}) $}.
For example, suppose that {\small $ \bm{q} $} is the first row in the transition matrix of Figure \ref{fig:mc}(b).  Then, the elements in {\small $ \bm{q} $} are:  {\footnotesize $q_1= \Pr(l_i^{t-1}=loc_1|l_i^t=loc_1) $}, {\footnotesize $q_2= \Pr(l_i^{t-1}=loc_2|l_i^t=loc_1) $}, {\footnotesize $q_3= \Pr(l_i^{t-1}=loc_3|l_i^t=loc_1) $}, etc.
Let {\footnotesize $ \bm{x}=(x_1,...,x_{n})^\mathrm{T}$} be a vector whose elements indicate {\footnotesize $\Pr ({\bm{r}^1},... ,{\bm{r}^{t-1}}|l_i^{t-1}, {D_{\mathcal{K}}^{t-1}})$}  with distinct values of {\small $ l_i^{t-1} \in \bm{loc} $}, e.g., {\footnotesize $ x_1 $} denotes {\footnotesize $\Pr ({\bm{r}^1},... ,{\bm{r}^{t-1}}|{l_i^{t-1}}=loc_1, {D_{\mathcal{K}}^{t-1}})$}.
We obtain the following by expanding {\small $ l_i^{t-1},l_i^{t-1'} \in \bm{loc} $} in \eqref{eq:bpl_cal_1st}.

\begin{myAlignSS}
\mathcal{L}^B \big( \textit{BPL}(A_i^\mathcal{T}, \mathcal{M}^{t-1})\big) 
&= \sup_{ \bm{q}, \bm{d} \in   P_i^B } \log \frac{q_1x_1+\cdots+q_n x_n}{d_1 x_1+\cdots+d_n x_n} \nonumber\\
&= \sup_{ \bm{q}, \bm{d} \in   P_i^B } \log  \frac{ {\boldsymbol{q}  \boldsymbol{x}}}{{\boldsymbol{d}   \boldsymbol{x}}} \nonumber
\end{myAlignSS}

\vspace{-8pt}
Next, we formalize the problem and constraints. 
Suppose that {\small $ \textit{BPL}(A_i^\mathcal{T}, \mathcal{M}^{t-1}) ={{\alpha}^B_{t-1}}$}.
According to the definition of BPL (as the supremum), for any {\footnotesize $ x_j,x_k \in \bm{x} $}, we have {\footnotesize $e^{-{{\alpha}^B_{t-1}}}  \leq \frac{x_j}{x_k} \leq e^{{{\alpha}^B_{t-1}}}$}.
Given $ \bm{x} $ as the variable vector and $ \bm{q},\bm{d} $ as the coefficient vectors,  {\footnotesize $ \mathcal{L}^B ({{\alpha}^B_{t-1}}) $}  is equal to the logarithm of the objective function \eqref{eq:lfp} in the following problem \eqref{eq:lfp}$ \sim $\eqref{eq:lfp_end2}.
\begin{myAlignSSS}
 \textnormal{maximize }  & \text{ }    \frac{ {\boldsymbol{q}  \boldsymbol{x}}}{{\boldsymbol{d}   \boldsymbol{x}}}  \label{eq:lfp}\\
 \textnormal{subject to  } 
 & \text{ } e^{-{{\alpha}^B_{t-1}}} \leq \frac{x_j}{x_k} \leq e^{{{\alpha}^B_{t-1}}},   \label{eq:lfp_end} \\
 &\text{ }  0 < {x_j}< 1 \text{ and }  0<x_k<1, \label{eq:lfp_end2}  \\
 & \text{ where } x_j,x_k \in \bm{x}, \text{ } j,k \in  [1,n]. \nonumber
\end{myAlignSSS}

\vspace{-18pt}
The above is a form of \textit{linear-fractional programming}\cite{bajalinov_linear-fractional_2003}, where the objective function is a ratio of two linear functions and the constraints are linear inequalities or equations.
A linear-fractional programming problem can be converted into a sequence of linear programming problems\cite{bajalinov_linear-fractional_2003} and then solved using the simplex algorithm\cite{dantzig_linear_1998} in time {\small $ O(2^{n}) $}.
When {\small $ n $} is large, the computation is time consuming.

\vspace{-5pt}
\textbf{Bounding the objective function by constraints.}
We now investigate a more efficient method to solve this problem by exploiting the structure of constraints.
From Inequalities \eqref{eq:lfp_end} and \eqref{eq:lfp_end2}, we know that the feasible region of the constraints are not empty and bounded; hence, an optimal solution exists.
By exploiting the constraints, we prove the following theorem, which enables the optimal solution to be found in time {\small $ O(n^2) $}.

\vspace{-5pt}
We define some notations that will be frequently used  in the following parts of this paper.
Suppose that the variable vector $ \bm{x} $ consists of two parts (subsets):  $ \bm{x}^+$ and $ \bm{x}^-$.
Let the corresponding coefficients vectors be {\small $ \bm{q}^+, \bm{d}^+ $} and {\small $ \bm{q}^-, \bm{d}^-$}.
Let {\small $ q=\sum{\bm{q}^+} $} and {\small $ d=\sum{\bm{d}^+} $}.
For example, suppose that  {\small $ \bm{x}^+=[x_1,x_3] $} and {\small $ \bm{x}^-=[x_2,x_4,x_5] $}. 
Then, we have {\small $ \bm{q}^+=[q_1,q_3] $}, {\small $ \bm{d}^+=[d_1,d_3] $}, {\small $ \bm{q}^-=[q_2,q_4,q_5] $}, and {\small $ \bm{d}^-=[d_2,d_4,q_5] $}. In this case, {\small $ q=q_1+q_3 $} and {\small $ d=d_1+d_3 $}.

\vspace{2pt}
\begin{theorem}
\label{thm:lfp}
If  the following Inequalities \eqref{eq:cond1} and \eqref{eq:cond2} are satisfied, the maximum value of the objective function in the problem \eqref{eq:lfp}$ \sim $\eqref{eq:lfp_end2} is {\footnotesize $ \frac{q(e^{{\alpha}_{t-1}^B} -1) + 1 }{d(e^{{\alpha}_{t-1}^B} - 1) + 1} $}.
\vspace{-5pt}
\begin{myAlignSSS}
&\frac{q_j}{d_j} > \frac{q(e^{{\alpha}_{t-1}^B} -1) + 1 }{d(e^{{\alpha}_{t-1}^B} - 1) + 1}, & \forall j\in[1,n] \text{ where } q_j \in \bm{q}^+, d_j \in \bm{d}^+ \label{eq:cond1}\\
&\frac{q_k}{d_k} \leq \frac{q(e^{{\alpha}_{t-1}^B} -1) + 1 }{d(e^{{\alpha}_{t-1}^B} - 1) + 1}, & \forall k\in[1,n] \text{ where } q_k \in \bm{q}^-, d_k \in \bm{d}^- \label{eq:cond2}
\end{myAlignSSS}
\vspace{-8pt}
\end{theorem}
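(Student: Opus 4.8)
The plan is to exploit two structural features of the problem: the objective $\frac{\bm{q}\bm{x}}{\bm{d}\bm{x}}$ is homogeneous of degree zero in $\bm{x}$, and the rows of $P_i^B$ sum to $1$ (so $\sum_j q_j = \sum_j d_j = 1$). First I would use scale invariance to simplify the feasible region. Since the objective depends only on the ratios $x_j/x_k$, the supremum over the open set described by \eqref{eq:lfp_end}--\eqref{eq:lfp_end2} equals the maximum over the compact box in which every coordinate lies in $[1, e^{\alpha_{t-1}^B}]$: normalizing the smallest coordinate to $1$, constraint \eqref{eq:lfp_end} forces the largest to be at most $e^{\alpha_{t-1}^B}$, and the two representations trace the same ray. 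On this box $\bm{d}\bm{x}>0$ because each $d_j\ge 0$, $\sum_j d_j=1$, and $x_j>0$. Writing $V \eqdef \frac{q(e^{\alpha_{t-1}^B}-1)+1}{d(e^{\alpha_{t-1}^B}-1)+1}$, it then suffices to show $\frac{\bm{q}\bm{x}}{\bm{d}\bm{x}}\le V$ on the box with equality at the vertex where every coordinate indexed by $\bm{x}^+$ equals $e^{\alpha_{t-1}^B}$ and every coordinate indexed by $\bm{x}^-$ equals $1$.

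Next I would \emph{linearize}. Because $\bm{d}\bm{x}>0$, the bound $\frac{\bm{q}\bm{x}}{\bm{d}\bm{x}}\le V$ is equivalent to the linear inequality $\sum_j (q_j - V d_j)\,x_j \le 0$. Setting $c_j \eqdef q_j - V d_j = d_j\big(\frac{q_j}{d_j}-V\big)$ (assuming the nondegenerate case $d_j>0$, the degenerate entries being handled trivially), the self-consistency conditions \eqref{eq:cond1} and \eqref{eq:cond2} read off exactly the signs of the $c_j$: condition \eqref{eq:cond1} gives $\frac{q_j}{d_j}>V$, hence $c_j>0$, for every $j$ with $q_j\in\bm{q}^+$; condition \eqref{eq:cond2} gives $\frac{q_k}{d_k}\le V$, hence $c_k\le 0$, for every $k$ with $q_k\in\bm{q}^-$.

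I would then maximize the separable linear form $\sum_j c_j x_j$ over the box, which puts $x_j$ at its upper bound $e^{\alpha_{t-1}^B}$ whenever $c_j>0$ and at its lower bound $1$ whenever $c_j\le 0$. By the sign computation above, this optimal assignment is precisely the candidate vertex ($\bm{x}^+$ at the top, $\bm{x}^-$ at the bottom). Evaluating the maximum with $\sum\bm{q}^+=q$, $\sum\bm{q}^-=1-q$, and the analogous identities for $\bm{d}$ (using the row sums of $P_i^B$), the value of $\sum_j c_j x_j$ at this vertex collapses to $q(e^{\alpha_{t-1}^B}-1)+1 - V\big[d(e^{\alpha_{t-1}^B}-1)+1\big]$, which is $0$ by the very definition of $V$. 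Hence $\sum_j c_j x_j\le 0$ throughout the box, so $\frac{\bm{q}\bm{x}}{\bm{d}\bm{x}}\le V$ everywhere, with equality attained at the candidate vertex; this is the claimed maximum.

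The hard part is not the algebra but untangling the apparent circularity: the partition $(\bm{x}^+,\bm{x}^-)$ determines $V$, while conditions \eqref{eq:cond1}--\eqref{eq:cond2} characterize that partition through comparison with $V$. The linearization step is what dissolves this, since freezing $V$ as a constant turns the fractional program into a separable linear maximization whose optimizer is read off coordinatewise, and the two conditions guarantee that this coordinatewise optimizer is exactly the proposed vertex. The only other points requiring care are justifying the reduction from the open feasible region to the closed box via scale invariance, and confirming $\bm{d}\bm{x}>0$ so that the fractional-to-linear equivalence is valid.
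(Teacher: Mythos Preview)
Your proposal is correct and is essentially the same argument as the paper's: the paper invokes Dinkelbach's Theorem to pass from the fractional program to the linear criterion $\max_{\bm{x}}\{Q(\bm{x})-\lambda D(\bm{x})\}=0$ with $\lambda=V$, which is exactly your linearization $\sum_j(q_j-Vd_j)x_j\le 0$, and then uses a lemma (their Lemma~\ref{lem:lfp_proof}) that normalizes the smallest coordinate and maximizes coordinatewise over the resulting box, which is your scale-invariance reduction to $[1,e^{\alpha_{t-1}^B}]^n$. The only difference is packaging: you derive both steps from first principles rather than citing Dinkelbach, which makes your write-up slightly more self-contained.
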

\begin{proof}
See Appendix \ref{appex:thm_tpl}.
\end{proof}

\vspace{-5pt}
As we mentioned previously, the calculations of $ \mathcal{L}^B(\cdot) $ and  $ \mathcal{L}^F(\cdot) $ are identical.
Therefore, given a transition matrix $ P_i^B $ (or $ P_i^F $) and the previous BPL (or the next FPL), the increment of the backward (or forward) privacy loss is the maximum value in the above theorem for any two rows $ \bm{q} $ and $ \bm{d} $ in $ P_i^B $ (or $ P_i^F $).
We denote {\small $\textit{BPL}(A_i^\mathcal{T}, \mathcal{M}^{t-1}) $} and {\small $\textit{FPL}(A_i^\mathcal{T}, \mathcal{M}^{t+1}) $} by {\small $ {{\alpha}^B_{t-1}} $} and {\small $ {{\alpha}^F_{t+1}} $}, respectively.

\vspace{-8pt}
\begin{myAlignSSS}
&\mathcal{L}^B 
 \big( {{\alpha}_{t-1}^B} \big)  = \max_{ \bm{q}, \bm{d} \in   P_i^B } 
\log\frac{ q(e^{{\alpha}_{t-1}^B}-1) +1}{ d(e^{{\alpha}_{t-1}^B}-1) +1} \label{eq:lfp_sol_b}  \\
&\mathcal{L}^F
 \big( {{\alpha}_{t+1}^F} \big)  = \max_{ \bm{q}, \bm{d} \in   P_i^F } 
\log\frac{ q(e^{{\alpha}_{t+1}^F}-1) +1}{ d(e^{{\alpha}_{t+1}^F}-1) +1} \label{eq:lfp_sol_f} 
\end{myAlignSSS}

\vspace{-10pt}
It is easy to see that we can always find such {\small $ \bm{q}^+ $} and {\small $ \bm{d}^+ $} satisfying Inequalities \eqref{eq:cond1} and \eqref{eq:cond2}.
Further, we give the following corollary for finding {\small $ \bm{q}^+ $} and {\small $ \bm{d}^+ $}.
\begin{corollary}
\label{col:q_d}
If Inequalities \eqref{eq:cond1} and \eqref{eq:cond2} are satisfied, we have {\small $ q_j > d_j $} in which $ q_j \in \bm{q}^+ $ and  {\small $ d_j \in \bm{d}^+ $}.
\end{corollary}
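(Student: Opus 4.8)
The plan is to deduce the element-wise inequality $q_j > d_j$ from the single ratio bound \eqref{eq:cond1} together with the elementary fact that each row of a transition matrix sums to one. Write $\lambda = e^{\alpha_{t-1}^B}-1$ and let $V = \frac{q(e^{\alpha_{t-1}^B}-1)+1}{d(e^{\alpha_{t-1}^B}-1)+1}$ denote the optimal objective value supplied by Theorem \ref{thm:lfp}. Since each $d_j$ is an entry of a stochastic matrix row we have $d_j > 0$, so \eqref{eq:cond1} is equivalent to $q_j > V d_j$. The whole argument then reduces to showing $V \ge 1$: once that is in hand, $q_j > V d_j \ge d_j$ is immediate for every index $j$ with $q_j \in \bm{q}^+$ and $d_j \in \bm{d}^+$.

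The key step is therefore proving $V \ge 1$, and here I expect the main obstacle to be that the formula for $V$ already entangles the aggregate quantities $q = \sum \bm{q}^+$ and $d = \sum \bm{d}^+$, so its sign cannot be read off directly. I would resolve this by summing \eqref{eq:cond1} over all indices $j$ with $q_j \in \bm{q}^+$ (the set is nonempty, otherwise the claim is vacuous), which yields $q > V d$. Substituting the closed form $V = \frac{q\lambda+1}{d\lambda+1}$ and clearing the positive denominator $d\lambda+1$ cancels the common $qd\lambda$ terms and leaves exactly $q > d$. Because BPL is non-negative we have $\alpha_{t-1}^B \ge 0$, hence $\lambda \ge 0$; then $q > d$ forces $(q-d)\lambda \ge 0$, i.e. $V = \frac{q\lambda+1}{d\lambda+1} \ge 1$.

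An alternative and perhaps more transparent route to $V \ge 1$ avoids the summation entirely: the point $\bm{x} = (c,\dots,c)$ with $0 < c < 1$ is feasible for \eqref{eq:lfp_end}--\eqref{eq:lfp_end2}, since every ratio equals $1 \in [e^{-\alpha_{t-1}^B}, e^{\alpha_{t-1}^B}]$ when $\alpha_{t-1}^B \ge 0$, and there the objective equals $\frac{\sum\bm{q}}{\sum\bm{d}} = \frac{1}{1} = 1$ because $\bm{q}$ and $\bm{d}$ are rows of $P_i^B$ and hence sum to one. As $V$ is by Theorem \ref{thm:lfp} the \emph{maximum} of the objective over the feasible region, $V \ge 1$ follows at once. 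Either way, combining $V \ge 1$ with \eqref{eq:cond1} and $d_j > 0$ gives $q_j > d_j$ as claimed; the only computation I expect to require care is the cancellation $q > Vd \Rightarrow q > d$, which is a one-line manipulation once the positive denominator is cleared.
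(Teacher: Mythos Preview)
The paper does not actually supply a proof for Corollary~\ref{col:q_d}; it is stated immediately after Theorem~\ref{thm:lfp} as a necessary condition used by Algorithm~\ref{algo:cal_bpl}, with no justification. Your argument fills that gap correctly. Of your two routes to $V\ge 1$, the second is the cleaner and more direct one: the constant vector is feasible and attains objective value~$1$, so the optimum $V$ guaranteed by Theorem~\ref{thm:lfp} (which you may invoke since \eqref{eq:cond1}--\eqref{eq:cond2} are assumed) satisfies $V\ge 1$; then \eqref{eq:cond1} gives $q_j>Vd_j\ge d_j$.

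One small slip worth fixing: you justify $d_j>0$ by saying each $d_j$ is an entry of a stochastic row, but row-stochasticity only gives $d_j\ge 0$ --- the paper itself uses transition matrices with zero entries, e.g.\ $\bigl(\begin{smallmatrix}0.8&0.2\\0&1\end{smallmatrix}\bigr)$ in Example~\ref{example:bpl}. This does not damage the argument: if $d_j=0$, inequality~\eqref{eq:cond1} (read as $q_j>Vd_j$) forces $q_j>0=d_j$ directly, so the conclusion still holds. Just state that case separately rather than asserting strict positivity.
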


\vspace{-3pt}

\vspace{-2pt}
Now, we simply examine the {\small $ \mathcal{L}^B (\cdot)$} and {\small $ \mathcal{L}^F (\cdot)$} in  Equations \eqref{eq:lfp_sol_b} and \eqref{eq:lfp_sol_f}. 
First, we have  {\small $ 0 \leq  \mathcal{L}^B ({{\alpha}^B_{t-1}})  $} and {\small $ 0 \leq  \mathcal{L}^F ({{\alpha}^B_{t-1}})  $} because of  {\small $q>d$}.
Second, when   $ q $ and $ d $ have the largest difference  (e.g., {\small$ \bm{q}=(1,0) $,$ \bm{d}=(0,1) $} and hence $q=1,d=0$), it follows that {\small $\mathcal{L}^B ({{\alpha}^B_{t-1}}) \leq {{\alpha}^B_{t-1}} $} and {\small $\mathcal{L}^F ({{\alpha}^F_{t+1}}) \leq {{\alpha}^F_{t+1}} $}.
Therefore, it is in accordance with Remark \ref{remark}.
The advantage of Equations \eqref{eq:lfp_sol_b} and \eqref{eq:lfp_sol_f} is being able to finely quantify BPL and FPL w.r.t. arbitrary $ P_i^B $ and $ P_i^F $.

\vspace{-8pt}
\subsection{Privacy Leakage Quantification Algorithm}
The next question is how do we find $ q $ and $ d $ (or {\small $ \bm{q}^+ $} and {\small $ \bm{d}^+ $}) that give the maximum objective function.
Inequalities \eqref{eq:cond1} and \eqref{eq:cond2} in Theorem \ref{thm:lfp} are sufficient conditions for obtaining such optimal value.
Corollary \ref{col:q_d} gives a necessary condition for satisfying Inequalities \eqref{eq:cond1} and \eqref{eq:cond2}.
Based on the above analysis, we design Algorithm \ref{algo:cal_bpl} for computing BPL or FPL .
\vspace{-2pt}
\begin{algorithm}[h]
\begin{footnotesize}
\SetKwRepeat{doWhile}{do}{while}
\SetKwComment{tcc}{//}{}
\DontPrintSemicolon 
\KwIn{$ P_i $  {\scriptsize ($  P_i^B$ or $  P_i^F$)};  $ \alpha $ ({\scriptsize ${ {{\alpha}^B_{t-1}}}$ or ${ {{\alpha}^F_{t+1}}}$}); $ \epsilon_t $ ({\scriptsize i.e., $\textit{PL}_0(\mathcal{M}^t) $}). }
\KwOut{BPL at time $ t $ or FPL at time $ t $}
     $ \mathcal{L} _i \longleftarrow 0$  \tcc*{{\scriptsize the value of Equation \eqref{eq:lfp_sol_b} or \eqref{eq:lfp_sol_f}}} 
    \ForEach{ two rows $ \bm{q} $, $ \bm{d} \in P_i $ }
    {  
            \ForEach(\tcc*[f]{\scriptsize Corollary \ref{col:q_d}}){$ q_j \in\bm{q},d_j \in\bm{d} $}    
                    { \lIf{$  q_j>d_j $} 
                    {add $ q_j $ to $ \bm{q}^+$; add $ d_j $ to $ \bm{d}^+  $ } 
            }
   $\textsf{\textit{update}} \longleftarrow false ;$\;
    \doWhile(\tcc*[f]{\scriptsize find $ \bm{q}^+ $, $ \bm{d}^+ $ by Theorem \ref{thm:lfp}} ) {\textsf{update}}{  
      {\scriptsize $q \longleftarrow \sum{\bm{q}^+}  $};  {\scriptsize $ d \longleftarrow \sum{\bm{d}^+} $} \tcc*{{\scriptsize update $ q $ and $ d $ }}
            \ForEach{$ q_j \in\bm{q}^+,d_j \in\bm{d}^+ $}
                    {  \tcc{{\scriptsize if it does not satisfy Inequality \eqref{eq:cond1}}} 
                    \lIf{{\scriptsize $  q_j/d_j \leq \big(q*(e^{{\alpha}}-1)+1 \big)/\big(d*(e^{{\alpha}}-1)+1 \big) $} \label{algo:iterate} \;} 
                          {\scriptsize    $ \bm{q}^+ \leftarrow \bm{q}^+ - q_j $;
                          $ \bm{d}^+ \leftarrow \bm{d}^+ - d_j $;
                          $\textsf{\textit{update}} \leftarrow true $;
                          \label{algo:findtpl:update}}  
            }
    }
            \lIf{\scriptsize $ \mathcal{L} _i    <  \log \frac{q*(e^{{\alpha}} -1) +1}{d*(e^{{\alpha}} -1) +1} $  }
            {\scriptsize $ \mathcal{L} _i    \longleftarrow \log \frac{q*(e^{{\alpha}} -1) +1}{d*(e^{{\alpha}} -1) +1} $}
    }
\Return{ {\scriptsize $  \mathcal{L} _i + \epsilon_t$}}   \tcc*{{\scriptsize by Equation \eqref{eq:bpl_f} or \eqref{eq:fpl_f}}}
\caption{Finding BPL or FPL}
\label{algo:cal_bpl}
\end{footnotesize}
\end{algorithm}

\vspace{-3pt}
\textbf{Computing BPL or FPL by solving the linear-fractional programming}.
According to the definition of BPL and FPL, we need to return the maximum privacy leakage (Line 12) w.r.t. any two rows in the given transition matrix (Line 2).
Lines 3$ \sim $11 are to solve one linear-fractional programming problem \eqref{eq:lfp}$ \sim $\eqref{eq:lfp_end2} w.r.t two specific rows in the transition matrix.
In Lines 3 and 4, we divide the variable vector $ \bm{x} $ into two parts according to Corollary \ref{col:q_d}, which gives the necessary condition for finding the maximum solution:
if  the coefficients {\footnotesize $ q_j \leq d_j $}, they are not in $ \bm{q}^+ $ and $ \bm{d}^+ $ that satisfy Inequalities \eqref{eq:cond1} and \eqref{eq:cond2}.
In other words, if {\footnotesize $ q_j > d_j $}, they are ``candidates'' in $ \bm{q}^+ $ and $ \bm{d}^+ $ that gives the maximum objective function.
In Lines 5$ \sim $11, we further check  $ \bm{q}^+ $ and $ \bm{d}^+ $  whether they satisfy Inequalities \eqref{eq:cond1} and \eqref{eq:cond2}.
According to Line 7, it is clear that any subset of $ \bm{q}^+ $ and $ \bm{d}^+ $ automatically satisfy Inequality \eqref{eq:cond2}.
In Lines 8$ \sim $10, we remove the pairs $ q_j \in \bm{q}^+$  and $ d_j \in \bm{d}^+$ that do not satisfy Inequality \eqref{eq:cond1}.
Note that the values of $ q $ and $ d $ (recall that {\footnotesize $ q=\sum{\bm{q}^+} $} and {\footnotesize $ d=\sum{\bm{d}^+} $}) will be recalculated due to such ``update'' (deletion in Line 10).
If $ q $ and $ d $ are updated, we need to recheck each pair of $ q_j $ and $ d_j $ in the current set of $ \bm{q}^+ $ and $ \bm{d}^+ $ until every pair of them satisfies Inequality \eqref{eq:cond1}.

\vspace{-5pt}
A subtle question may arise regarding such ``update''.
In Lines 8$ \sim $10, if \textit{several} pairs of $ q_j$ and $ d_j $ do not satisfy Inequality \eqref{eq:cond1}, say, {\footnotesize $ \{q_{1}, d_{1}\} $} and {\footnotesize $ \{q_{2}, d_{2}\} $},
one may wonder if it is possible that, after removing {\footnotesize $ \{q_{1}, d_{1}\} $} from {\footnotesize $ \bm{q}^+ $} and $ \bm{d}^+ $, Inequality \eqref{eq:cond1} can be satisfied for {\footnotesize $ \{q_{2}, d_{2}\} $} due to the update of $ q $ and $ d $, i.e., {\footnotesize $ \frac{q_2}{d_2} > \frac{(q-q_1)*(e^{{\alpha}} -1) +1}{(d-d1)*(e^{{\alpha}} -1) +1} $}.
We show that this is not possible.
If {\footnotesize $ \frac{q_1}{d_1} \leq \frac{q*(e^{{\alpha}} -1) +1}{d*(e^{{\alpha}} -1) +1}$},
we have  {\footnotesize $ \frac{q*(e^{{\alpha}} -1) +1}{d*(e^{{\alpha}} -1) +1} \leq \frac{(q-q_1)*(e^{{\alpha}} -1) +1}{(d-d1)*(e^{{\alpha}} -1) +1} $}.
Hence, {\footnotesize $ \frac{q_2}{d_2} \leq \frac{q*(e^{{\alpha}} -1) +1}{d*(e^{{\alpha}} -1) +1} \leq \frac{(q-q_1)*(e^{{\alpha}} -1) +1}{(d-d1)*(e^{{\alpha}} -1) +1} $}.
Therefore, we can remove  multiple pairs of {\footnotesize $ q_j$} and {\footnotesize $ d_j $} that do not satisfy Inequality \eqref{eq:cond1} at one time (Lines 8$ \sim $10).

It is easy to know that, if $ q_i=d_i $ for each $ i\in[1,n]$, the update will be terminated with empty $ \bm{q}^+ $ and $ \bm{d}^+ $.
In this case, we have $ q=d $; hence {\small $ \mathcal{L}^B (\cdot)$} and {\small $ \mathcal{L}^F (\cdot)$} are $ 0 $.

\textbf{Algorithm complexity}.
The time complexity for solving one linear-fractional programming problem (Lines 3$ \sim $11) w.r.t. two specific rows of the transition matrix is {\small $ O(n^2) $} because Line \ref{algo:iterate} may iterate {\small $ n*(n-1) $} times in the worst case.
The overall time complexity of Algorithm \ref{algo:cal_bpl}  is {\small $ O(n^4) $}.

\vspace{-5pt}
\section{Bounding Temporal Privacy Leakage}
\label{sec:release_mechanism}
In this section, we design private data release algorithms that can be used to convert a traditional DP mechanism into one satisfying $ \alpha $-DP$_\mathcal{T} $ by allocating calibrated privacy budgets.

We first investigate the upper bound of BPL and FPL.
We have demonstrated that  BPL and FPL may accumulate and increase over time in Figure \ref{fig:tpl}.
A natural question is that: is there a limit of  BPL and FPL over time.
For {\footnotesize $ \mathcal{M}^t $} that satisfies $ \epsilon $-DP at each {\small $ t \in [1,T] $}, we know that a loose upper bound of BPL or FPL over time $ T $ is $ T\epsilon $  according to Remark \ref{remark}.
When $ T $ is unknown, giving the upper bound of BPL or FPL is a challenge.

\begin{theorem}
\label{thm:sup}
Given a transition matrix {\footnotesize $ P_i^B $}  (or {\footnotesize $ P_i^F $}) representing temporal correlation,
let $ q $ and $ d $ be the ones that give the maximum value in Equation \eqref{eq:lfp_sol_b} (or Equation \eqref{eq:lfp_sol_f}) and {\small $ q \neq d $}.
For {\small $ \mathcal{M}^t $} that satisfies $ \epsilon $-DP at each {\small $ t \in [1,T] $},
there are four cases regarding the supremum of BPL (or FPL) over time.

\begin{myAlignSS}
\begin{cases}
 {\tiny \log \frac{{\sqrt { 4d{e^\epsilon }(1-q ) + {{( d + q{e^\epsilon } -1)}^2} }  + d + q{e^\epsilon } - 1}}{{2d}} }&   d \neq 0  \\
\log \frac{(1-q) e^\epsilon}{1-q e^\epsilon }     &  {\footnotesize d=0 \text{ and }  q \neq 1 \text{ and }  \epsilon \leq \log(1/q)} \\
\text{not exist }  &  {\footnotesize d=0 \text{ and }   q \neq 1 \text{ and }  \epsilon > \log(1/q)}  \\
\text{not exist } &  d=0  \text{ and }  q=1 
\end{cases} \nonumber
\end{myAlignSS} 
\end{theorem}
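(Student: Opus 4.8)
The plan is to recast the recursion for BPL as a one-dimensional dynamical system and study its fixed points. Fixing the optimal pair of rows $\bm{q},\bm{d}$ that attain the maximum in Equation~\eqref{eq:lfp_sol_b} and setting $\textit{PL}_0(\mathcal{M}^t)=\epsilon$ at every step, Equations~\eqref{eq:bpl_f} and~\eqref{eq:lfp_sol_b} collapse to the scalar iteration
\begin{myAlignSS}
\alpha_t^B = \log\frac{q(e^{\alpha_{t-1}^B}-1)+1}{d(e^{\alpha_{t-1}^B}-1)+1}+\epsilon, \qquad \alpha_1^B=\epsilon. \nonumber
\end{myAlignSS}
The supremum of BPL over time is precisely $\lim_{t\to\infty}\alpha_t^B$ whenever this limit is finite. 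I would substitute $y_t=e^{\alpha_t^B}$ to linearize the logarithm, turning the recursion into the linear-fractional map $y_t=g(y_{t-1})$ with $g(y)=e^\epsilon\,\frac{qy+(1-q)}{dy+(1-d)}$, so that the sought supremum is the logarithm of $\lim_t y_t$.

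First I would establish monotonicity and a clean convergence dichotomy. Differentiating shows the sign of $g'$ equals the sign of $q-d$, which is positive since $q>d$ (Corollary~\ref{col:q_d}); hence $g$ is increasing. Because $q>d$ and $y_1=e^\epsilon>1$, the inequality $(q-d)(e^\epsilon-1)>0$ gives $g(y_1)>y_1$, and monotonicity of $g$ then propagates $y_t\ge y_{t-1}$ by induction, so $\{y_t\}$ is increasing. A monotone increasing sequence converges iff it is bounded above, which reduces the whole theorem to deciding boundedness of the orbit of $g$ in each regime.

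Next I would solve the fixed-point equation $y=g(y)$ case by case. Clearing denominators yields the quadratic $dy^2+(1-d-qe^\epsilon)y-e^\epsilon(1-q)=0$. When $d\neq 0$, the map satisfies $g(y)<\lim_{y\to\infty}g(y)=e^\epsilon q/d$, so the orbit is bounded and converges; the product of the two roots is $-e^\epsilon(1-q)/d\le 0$, so exactly one root is positive, namely the $+$ branch of the quadratic formula, and taking its logarithm gives the first case. When $d=0$ the map becomes affine, $g(y)=e^\epsilon q\,y+e^\epsilon(1-q)$, with slope $e^\epsilon q$: if $q=1$ it is pure scaling $y\mapsto e^\epsilon y$, whence $\alpha_t^B=t\epsilon\to\infty$ (fourth case); if $q\neq1$, the affine fixed point $y^*=e^\epsilon(1-q)/(1-qe^\epsilon)$ is attracting precisely when the slope is below one, i.e.\ $\epsilon<\log(1/q)$, giving $\log\frac{(1-q)e^\epsilon}{1-qe^\epsilon}$ (second case), and the orbit escapes to $+\infty$ otherwise (third case).

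The main obstacle, I expect, is not the algebra but rigorously tying the algebraic fixed point to the analytic limit: I must verify that the increasing orbit converges to the \emph{claimed} root rather than overshooting it or being drawn to the spurious negative root. The cleanest argument is to show that $y_1=e^\epsilon$ lies strictly below the positive fixed point $y^*$ and then invoke monotonicity of $g$ together with $g(y^*)=y^*$ to conclude $y_t\uparrow y^*$ by monotone convergence, identifying the limit with $y^*$. A secondary subtlety is the threshold $\epsilon=\log(1/q)$ in the $d=0$ branch, where the affine map has slope exactly one and degenerates to the arithmetic progression $y_t=y_{t-1}+e^\epsilon(1-q)$, which diverges; consistently, the closed form $\log\frac{(1-q)e^\epsilon}{1-qe^\epsilon}$ tends to $+\infty$ as $\epsilon\uparrow\log(1/q)$, so no finite supremum exists at the boundary and the stated formula should be read as its limiting value. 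Finally, the identical argument carries over to FPL verbatim by replacing $P_i^B$ and Equation~\eqref{eq:lfp_sol_b} with $P_i^F$ and Equation~\eqref{eq:lfp_sol_f}.
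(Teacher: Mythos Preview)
The paper omits the proof of this theorem entirely (``We omit the proof due to space limitations''), so there is no in-paper argument to compare against. Your fixed-point analysis of the linear-fractional recursion $y_t=g(y_{t-1})$ with $g(y)=e^\epsilon\,\frac{qy+(1-q)}{dy+(1-d)}$ is the natural approach and is correct: monotonicity of $g$ (via $q>d$), the bound $g(y)<e^\epsilon q/d$ when $d\neq 0$, and the sign of the product of roots together pin down the unique positive fixed point, and the affine degeneration at $d=0$ handles the remaining cases exactly as you describe.

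Two small remarks. First, you are right to flag that the theorem statement silently assumes $q$ and $d$ are \emph{fixed} across iterations, whereas Algorithm~\ref{algo:cal_bpl} shows they can in principle vary with the previous BPL; the paper acknowledges this immediately after the theorem and resolves it by taking $q,d$ to be the maximizers \emph{at} the supremum, so your ``fixing the optimal pair'' is consistent with how the authors intend the result to be used. Second, your observation about the boundary $\epsilon=\log(1/q)$ is well taken: at that point the affine map has slope one and the orbit is the divergent arithmetic progression, so strictly speaking the second case should read $\epsilon<\log(1/q)$ and the third $\epsilon\ge\log(1/q)$; the closed form in case two indeed blows up there, so interpreting it as a limiting value is the right reading.
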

\vspace{-10pt}

We omit the proof due to space limitations.
\vspace{5pt}

The above theorem  is applicable for both BPL and FPL because the calculation of BPL and FPL is the same.
According to the previous analysis, we can consider that the growth of BPL and FPL is in the same manner but in the reversed directions on the timeline (see Figure \ref{fig:tpl}(a)(b)).

\begin{figure}[t]
\centering
\includegraphics[scale=0.55]{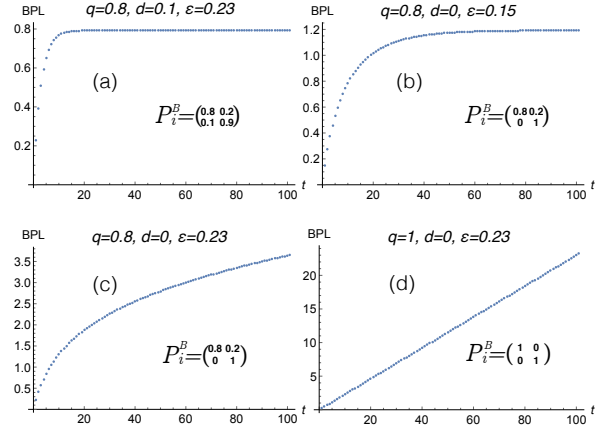} 
\caption{Examples of the maximum BPL over time.}
\label{fig:example_sup}
\vspace{-8pt}
\end{figure}

\begin{example}[The supremum of the increased BPL over time]
Suppose that $ \mathcal{M}^t $ satisfies $ \epsilon $-DP at each time point.
In Figure \ref{fig:example_sup}, we demonstrate the maximum BPL w.r.t. different  $ \epsilon $ and different transition matrices that represent {\small $ P_i^B $}.
In (a) and (b), the supremum does not exist.
In (c) and (d), we can directly calculate the supremum using Theorem \ref{thm:sup}.
The results are in line with the ones from computing BPL step by step at each time point using Algorithm \ref{algo:cal_bpl}.
\end{example}

\vspace{-5pt}
\textbf{Achieving $ \alpha $-DP$ _\mathcal{T} $ by limiting upper bound.}
We now design a data release algorithm utilizing Theorem \ref{thm:sup} to bound TPL.
Theorem \ref{thm:sup} tells us that, 
if it is not the strongest temporal correlation (i.e., {\small $ d=0 $} and {\small $ q=1 $}), 
we may bound  BPL or FPL within a desired value by allocating an appropriate privacy budget to a traditional DP mechanism at each time point.
A problem is that, in Theorem \ref{thm:sup}, the $ q $ and $ d $  are assumed to be the ones that give the maximum value of the objective function; however, they are initially unknown.
According to our analysis of Algorithm \ref{algo:cal_bpl}, such $ q $ and $ d $ depend on not only the given transition matrix but also the previous BPL (or the next FPL); however, the ``previous BPL'' is not clear when BPL achieves its supremum at some time point.
To solve this problem, we can consider that, if {\small $ T $} is approaching infinite,  BPL at time {\small $ T $} and {\small $ T+1 $} are both supremum, so that we can find $ q $ and $ d $ that give the maximum objective function using Algorithm \ref{algo:cal_bpl} by setting the ``previous BPL'' to such supremum.
Now, we can find an appropriate $ \epsilon $ to bound BPL based on Theorem \ref{thm:sup}.
For example, if {\small $ d=0 $} and {\small $ q \neq 1 $},  we can solve an equation with one variable {\footnotesize $ \epsilon $}:
{\footnotesize $ \log \frac{(1-q) e^\epsilon}{1-q e^\epsilon }   =  \alpha^B  $} (we can prove that a positive solution always exists) 
where $ \alpha^B $ is a desirable privacy level.
Similarly, we can restrict FPL within a given value.
We use this idea to bound both BPL and FPL, as shown in Algorithm \ref{algo:release1}.

\begin{algorithm}[h]
\begin{scriptsize}
\SetKwRepeat{doWhile}{do}{while}
\SetKwComment{tcc}{//}{}
\DontPrintSemicolon 
\KwIn{ { $  P_i^B$ and $  P_i^F$ , $ i \in \bm{U} $};  $ \alpha $ (desired privacy level). }
\KwOut{private data satisfying $ \alpha $-DP$_\mathcal{T} $}
  \ForEach{ user $ i  \in \bm{ U}$  }
     {  
	    Initialize $ \alpha ^B \in (0, \alpha ] $ as the supremum of BPL; \;
	    Find $ q^B $ and $ d^B $ that give the maximum value in Eq.\eqref{eq:lfp_sol_b}  using Algorithm \ref{algo:cal_bpl} with input {\tiny $ \alpha_{t-1}^B = \alpha ^B  $};  \;
	    Find $ \epsilon_i^B $ using Theorem \ref{thm:sup} with the above $ q^B $, $ d^B $ and $ \alpha ^B  $;\;
	    $ \alpha ^F \leftarrow \alpha - \alpha ^B+\epsilon_i^B $; \tcc{see Equation \eqref{eq:tpl_comp1}} 
	    Find $ q^F $ and $ d^F $ using Algorithm \ref{algo:cal_bpl} with input {\tiny $ \alpha_{t+1}^F = \alpha ^F  $}  \;
	    Find $ \epsilon_i^F $ using Theorem \ref{thm:sup} with the above $ q^F $, $ d^F $ and $ \alpha ^F  $; \;
	    \lIf{{\tiny $ \epsilon_{i}^B <  \epsilon_{i}^F $}}{ \textbf{goto} Line 2, initialize a larger $ \alpha ^B $;}
	    	   \lElseIf{{\tiny $ \epsilon_{i}^B > \epsilon_{i}^F $}}{ \textbf{goto} Line 2, initialize a smaller $ \alpha ^B $; } 
	    	   \lElse{  {\footnotesize $ \epsilon_{i} \leftarrow  \epsilon_{i}^B$ }}
     }
     $ \epsilon \leftarrow \min \{\epsilon_i, i \in \bm{U}\} $; \;
\Return{ $ \epsilon$-DP data at each time point}  
\caption{{\small Releasing Data with $ \alpha $-DP$ _\mathcal{T} $ by upper bound}}
\label{algo:release1}
\end{scriptsize}
\end{algorithm}

\textbf{Achieving $ \alpha $-DP$ _\mathcal{T} $ by privacy leakage quantification.}
We now design Algorithm \ref{algo:release2} to overcome a drawback of Algorithm \ref{algo:release1}:
when $ T $ is too short for the accumulation of temporal privacy leakage to result in a significant increase, we may not take full advantage of the privacy budgets.
Our observation is that, the DP mechanisms at the first and last time points should be allocated more budgets because they are relatively more ``influential'' in term of privacy loss.
For example,
BPL of {\footnotesize $ \mathcal{M}^t, t \in[2,T]$} is affected by the first mechanism {\footnotesize $ \mathcal{M}^1$}, and FPL of  {\footnotesize $ \mathcal{M}^t, t \in[1,T-1]$} is affected by the last mechanism {\footnotesize $\mathcal{M}^T $}.
Our idea is to allocate more privacy budgets to {\footnotesize $ \mathcal{M}^1 $} and {\footnotesize $ \mathcal{M}^T $} so that both BPL and FPL are bounded in given values at each time point.
For example, if we want that BPL at every time points are exactly the same value {\footnotesize $ \alpha^B $}, i.e., {\footnotesize $  BPL(\mathcal{M}^1)=\cdots= BPL(\mathcal{M}^T) = \alpha^B$}, then we need to make sure: (i) {\footnotesize $ PL_0(\mathcal{M}^1)=\alpha^B$}  and (ii) {\footnotesize $ \mathcal{L}^B\big( \alpha^B\big) + \epsilon_m^B = \alpha^B$} in which {\footnotesize $ \epsilon_m^B $} is the privacy budget allocated in the ``middle'' of the timeline, i.e., from $ 2 $ to $ T-1$.
We can solve the above equations to obtain   $ \epsilon_m^B $ ensuring {\footnotesize $  BPL(\mathcal{M}^1)=\cdots= BPL(\mathcal{M}^T) = \alpha^B$}.
Similarly, we can bound FPL in a given $ \alpha^F $ by finding another {\footnotesize $ \epsilon_m^F $}.
If {\footnotesize $ \epsilon_m^B \neq \epsilon_m^F $}, we can assign {\footnotesize $ \epsilon_m $} as {\footnotesize $\min \{ \epsilon_m^B, \epsilon_m^F  \} $} to ensure both BPL and FPL are bounded in $\min \{\alpha^B, \alpha^F\}  $.
It is easy to know that, when {\footnotesize $ \epsilon_m^B = \epsilon_m^F $}, we can exactly achieve $ \alpha $-DP$ _\mathcal{T} $ at each time point.
\begin{algorithm}[h]
\begin{scriptsize}
\SetKwRepeat{doWhile}{do}{while}
\SetKwComment{tcc}{//}{}
\DontPrintSemicolon 
\KwIn{ { $  P_i^B$ and $  P_i^F$ , $ i \in \bm{U} $};  $ \alpha $ (desired privacy level). }
\KwOut{private data satisfying $ \alpha $-DP$_\mathcal{T} $}
  \ForEach{ user $ i  \in \bm{ U}$  }
     {  
	    Initialize $ \alpha ^B \in (0, \alpha ] $ as the supremum of BPL, $ \epsilon_{i,1} =\alpha ^B $; \;
	    Find $ q^B $ and $ d^B $  using Algorithm \ref{algo:cal_bpl} with input {\tiny $ \alpha_{t-1}^B = \epsilon_{i,1}  $} \;
	    Find {\footnotesize $ \epsilon_{i,m}^B $} by solving {\tiny $ \mathcal{L}^B \big(\epsilon_{i,1}\big)+\epsilon_{i,m}^B =\epsilon_{i,1} $} with  {\footnotesize $ q^B, d^B$};\;
	    {\footnotesize $ \epsilon_{i,T} \leftarrow \alpha - \epsilon_{i,1}^B+\epsilon_{i,m}^B $}; \tcc{see Equation \eqref{eq:tpl_comp1}} 
	    Find {\footnotesize $ q^F $} and {\footnotesize $ d^F $} using Algorithm \ref{algo:cal_bpl} with input {\tiny $ \alpha_{t+1}^F = \epsilon_{i,T}  $} \;
	   Find {\tiny $ \epsilon_{i,m}^F $} by solving {\tiny $ \mathcal{L}^F \big(\epsilon_{i,T}\big)+\epsilon_{i,m}^F =\epsilon_{i,T} $} with  {\footnotesize $ q^F, d^F$};\;
	   \lIf{{\tiny $ \epsilon_{i,m}^B <  \epsilon_{i,m}^F $}}{ \textbf{goto} Line 2, initialize a larger $ \alpha ^B $;}
	   \lElseIf{{\tiny $ \epsilon_{i,m}^B > \epsilon_{i,m}^F $}}{ \textbf{goto} Line 2, initialize a smaller $ \alpha ^B $; } 
	   \lElse{  {\footnotesize $ \epsilon_{i,m} \leftarrow  \epsilon_{i,m}^B$}}
     }
     {\tiny $ \epsilon_1 \leftarrow \min \{\epsilon_{i,1}, i \in \bm{U}\}, \epsilon_t \leftarrow \min \{\epsilon_{i,m}, i \in \bm{U}\}, \epsilon_T \leftarrow \min \{\epsilon_{i,T}, i \in \bm{U}\}   $}; \;
\Return{ $ \epsilon_t$-DP data at $ t \in [1,T] $}  
\caption{{\small Releasing Data with $ \alpha $-DP$ _\mathcal{T}$ by quantification}}
\label{algo:release2}
\end{scriptsize}
\end{algorithm}

\vspace{-8pt}
We note that, the initializations of  {\footnotesize $ \alpha^B $} in both Algorithm \ref{algo:release1} and \ref{algo:release2} are nontrivial: too large or too small {\footnotesize $ \alpha^B $} results in more iterations to converge to {\footnotesize $ \epsilon_{i,m}^B =  \epsilon_{i,m}^F$}.
We can prove that {\footnotesize $ \epsilon_{i,m}^B =  \epsilon_{i,m}^F$} always can be achieved. 
We delegate the detailed descriptions and proofs to the long version of our paper.

\vspace{-5pt}
\section{Experimental Evaluation}
\label{sec:exp}
In this section, we design experiments for the following:
(1) verifying the runtime and correctness of our privacy leakage quantification algorithm (Algorithm \ref{algo:cal_bpl}),
(2) investigating the impact of the temporal correlations on privacy leakage and
(3) evaluating the data release Algorithms \ref{algo:release1} and \ref{algo:release2}.
We implemented all the algorithms in Java and conducted the experiments on a machine with an Intel Core i7 2.8GHz  CPU and 16 GB RAM running OSX El Capitan.

\vspace{-2pt}
\textbf{The setting of temporal correlations.}
To evaluate if our privacy loss quantification algorithms can perform well under diverse circumstances, we need different degrees of temporal correlations.
Although there are well studied methods to estimate the temporal correlations, in our experiments, we \textit{generate} the correlations (transition matrices) directly to eliminate the effect of different estimation algorithms or datasets.

\vspace{-2pt}
We now present a method for obtaining different degrees of temporal correlations.
First, we generate a transition matrix indicating the ``strongest'' correlation that contains a cell with probability $ 1.0 $ at each row but for different columns (this type of transition matrix will lead to an upper bound of TPL, as shown in Examples \ref{example:bpl} and \ref{example:fpl}).
Then, we perform \textit{Laplacian smoothing}\cite{sorkine_laplacian_2004}, which is a method originally used to smooth a polygonal mesh, to \textit{uniformize} the probabilities of $ P_i $ in different degree. 
Let $ {p_{jk}}  $ be an element at the $ j $th row and $ k $th column of the matrix $ P_i $.
The new probabilities $ \hat{p_{jk}}  $ are generated using Equation \eqref{eq:laplace_smoothing}, where $ s $ is a positive parameter that controls the degrees of smoothing.
A smaller $s$ results in a stronger temporal correlation.
\begin{myAlignSSS}
\hat{p_{jk}} = \frac{ p_{jk} +s }{\sum_{u=1}^{n}{(p_{ju} +s)}} \label{eq:laplace_smoothing}
\end{myAlignSSS}
\vspace{-11pt}
We note that, the degrees of correlation with $ s $ are only comparable with each other under the same $ n $ (i.e., $ |\bm{loc}| $). 

%

\subsection{Runtime of Privacy Quantification Algorithms}
\label{subsec:runtime}
In this section, we compare the runtime of our algorithm  with  \textsf{Gurobi}\footnote{http://www.gurobi.com/.  Commercial software. We use version 6.5.}  and \textsf{lp\_solve}\footnote{http://lpsolve.sourceforge.net/. Open source software. We use version 5.5.}, which are two well-known softwares for solving optimization problems, e.g., the linear-fractional programming problem \eqref{eq:lfp}$ \sim $\eqref{eq:lfp_end2} in our setting.
We run our privacy quantification algorithm $ 30 $ times, and run \textsf{Gurobi} and \textsf{lp\_solve} $ 5 $ times (because they are very time-consuming),
and then calculate the average runtime for each of them.
At each time, we randomly generate a transition matrix $ P_i $ whose elements are uniformly drawn from $ [0,1] $. 
We verified that the optimal solution returned by the three algorithms are the same.
In the following, we describe two factors that may affect the runtime: $ \alpha $ as BPL at the previous time point or FPL at the next time point (i.e., one input of Algorithm \ref{algo:cal_bpl}), $ n $ as the domain size of transition matrix.
The results are shown in Figure \ref{fig:ex_runtime}.

\begin{figure}[t]
\centering
\includegraphics[scale=0.35]{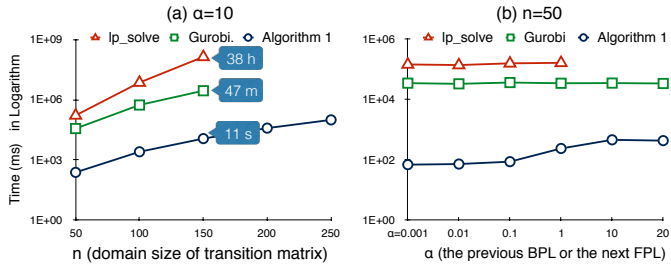} 
\caption{Runtime of Privacy Quantification Algorithms.}
\label{fig:ex_runtime}
\vspace{-10pt}
\end{figure}

\textbf{Runtime vs. $ n $}.
In Figure \ref{fig:ex_runtime}(a), we show the runtime of the three algorithms with inputs of {\normalsize $ \alpha=10 $} and a {\normalsize $ n \times n $} random probability matrix $ P_i $.
The runtime of all algorithms increase along with $ n $ because $ n $ is the number of variables in our linear-fractional program.
Algorithm \ref{algo:cal_bpl} significantly outperforms \textsf{Gurobi} and \textsf{lp\_solve}.
For example, in Figure \ref{fig:ex_runtime}(a), when {\normalsize $ n=150 $}, Algorithm \ref{algo:cal_bpl} only spends $ 11 $ seconds, whereas the runtime of \textsf{Gurobi} and \textsf{lp\_solve} are about $ 47 $ minutes and {$ 38 $} hours, respectively.
Since \textsf{Gurobi} and \textsf{lp\_solve} spend tremendous time when {\normalsize $ n>150 $},  we omit them in the graph. 


\textbf{Runtime vs. $ \alpha $}.
In Figure \ref{fig:ex_runtime}(b), we show that, a larger previous BPL (or the next FPL), i.e., $ \alpha $, may lead to higher runtime of Algorithm \ref{algo:cal_bpl}, whereas \textsf{Gurobi} and \textsf{lp\_solve} are stable for varying $ \alpha $.
The reason is that, when $ \alpha $ is large, 
Algorithm \ref{algo:cal_bpl} may take more time in 
Lines $ 9 $ and $ 10 $ for updating  each pair of $ q_j  \in \bm{q}^+$ and $ d_j  \in \bm{d}^+$ to satisfy Inequality \eqref{eq:cond1}.
An update in Line $ 10 $ is more likely to occur due to a large $ \alpha $ because {\footnotesize $  \frac{q(e^{{\alpha}} -1) + 1 }{d(e^{{\alpha}} - 1) + 1} $} is increasing with $ \alpha $.
However, such growth of runtime along with $ \alpha $ will not last so long because the update happens $ n-1 $ times in the worse case (according to our previous analysis, the update will be terminated if only one element is left in $ \bm{q}^+ $).
As shown in Figure \ref{fig:ex_runtime}(b), when $ \alpha >10 $, the runtime of Algorithm \ref{algo:cal_bpl} becomes stable.
We only obtain a part of the runtime for \textsf{lp\_solve}  because a precision problem occurs when $ \alpha \geq 10 $ due to the design of \textsf{lp\_solve}.

\subsection{Impact of Temporal Correlations on Privacy Leakage}
\label{subsec:ex_impact}

In this section, for the convenience of explanation, we only present the impact of temporal correlations on BPL because
 the growth of BPL and FPL are in the same way but in the reversed directions on the timeline.
We examined  $ s $ values in Equation \eqref{eq:laplace_smoothing} ranging from 0.005 to 1. 
We set $n $ to $ 50 $ and $ 200 $. 
Let $ \varepsilon $ be the privacy budget of $ \mathcal{M}^t $ at each time point. 
We test $ \varepsilon =1$ and 0.1. 
The results are shown in Figure \ref{fig:ex_tpl} and  are summarized as follows.

\textbf{Privacy Leakage vs. $ \bm{s} $.}
Figure \ref{fig:ex_tpl} shows that the privacy leakage caused by a non-trivial temporal correlation will increase over time, and such growth first increases sharply and then remains stable because the increment is calculated recursively.
The increase caused by a stronger temporal correlations (i.e., smaller $ s $) is steeper, and the time for the increase is longer.
Consequently, stronger correlations result in higher privacy leakage.

\textbf{Privacy Leakage vs. $ \bm{\varepsilon} $.}
Comparing Figures \ref{fig:ex_tpl}(a) and (b), we found that  $0.1$-DP  significantly delayed the growth of privacy leakage.
Taking $ s=0.005 $, for example, the noticeable increase continues for almost 8 timestamps when $ \varepsilon=1 $ (Figures \ref{fig:ex_tpl}(a)), whereas it continues for approximately 80 timestamps when $ \varepsilon=0.1 $ (Figures \ref{fig:ex_tpl}(b)).
However, after a sufficient long time, the privacy leakage in the case of $ \varepsilon=0.1 $ is not substantially lower than that of $ \varepsilon=1 $ under stronger temporal correlations.  
This is because, although the privacy leakage is eliminated at each time point by setting a small privacy budget, the adversaries can eventually learn sufficient information from the continuous releases.

\textbf{Privacy Leakage vs. $ n$.}
Under the same $ s $, TPL is smaller when $ n$ (dimension of the transition matrix) is larger, as shown in the lines $ s=0.005 $ with $ n =50$ and $ n =200$ of Figure \ref{fig:ex_tpl}. 
 This is because the transition matrices tend to be uniform (weaker correlations) when the dimension is larger.

In conclusion, the experiments reveal that our quantification algorithms can flexibly respond to different degrees of temporal correlations.

\begin{figure}[t]
\centering
\includegraphics[width=0.95\linewidth]{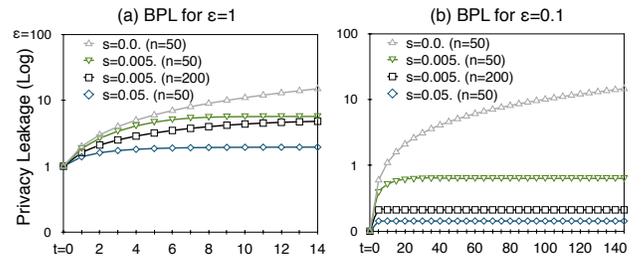} 
\caption{Evaluation of BPL.}
\label{fig:ex_tpl}
\vspace{-10pt}
\end{figure}

\begin{figure}[t]
\centering
\includegraphics[scale=0.33]{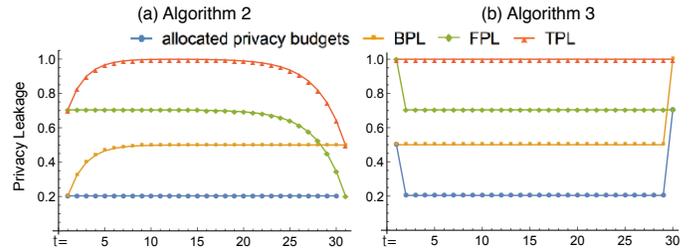} 
\caption{Data Release Algorithms with {\small $ 1 $-DP$_\mathcal{T} $}.}
\label{fig:ex_budgets}
\vspace{-10pt}
\end{figure}

\subsection{Evaluation of Data Releasing Algorithms}
In this section, we first show a visualization of privacy allocation of  Algorithms \ref{algo:release1} and \ref{algo:release2}, then we compare the data utility in terms of Laplace noise.

Figure \ref{fig:ex_budgets} shows an example of budget allocation, w.r.t. {\scriptsize $ P_i^B=\big(\begin{smallmatrix} 0.8&0.2\\ 0.2&0.8\end{smallmatrix} \big)$} and {\scriptsize $ P_i^F=\big(\begin{smallmatrix} 0.8&0.2\\ 0.1&0.9\end{smallmatrix} \big)$}.
The goal is $ 1 $-DP$ _\mathcal{T} $.
It is easy to see that Algorithm \ref{algo:release2} has better data utility because it exactly achieves the desired privacy level.

Figure \ref{fig:ex_utility} shows the data utility of Algorithms \ref{algo:release1} and \ref{algo:release2} with $ 2 $-DP$ _\mathcal{T} $.
We calculate the absolute value of the Laplace noise with the allocated budgets (as shown in Figure \ref{fig:ex_budgets}).
Higher value of noise indicates lower data utility.
In Figure \ref{fig:ex_utility}(a), we test the data utility under backward and forward temporal correlation both with parameter $ s=0.001 $, which means relatively strong correlation.
It shows that, when $ T $ is short, Algorithm \ref{algo:release2} outperforms Algorithm \ref{algo:release1}.
In other words, regardless of how long $ T $ is,  Algorithm \ref{algo:release1} perturbs data in the same way.
In Figure \ref{fig:ex_utility}(b), we investigate the data utility under different degree of correlations.
The dash line indicates the absolute value of Laplace noise if no temporal correlation exists (privacy budget is $ 2 $).
It is easy to see that the  data utility significantly decays under strong correlation $ s=0.01 $.

\begin{figure}[t]
\centering
\includegraphics[scale=0.27]{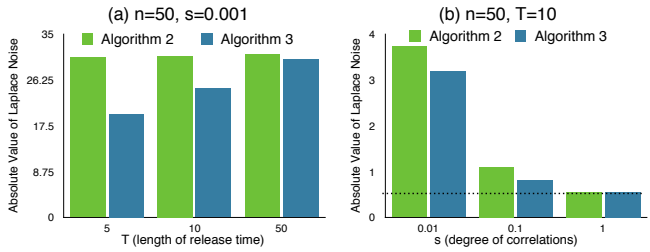} 
\caption{Data utility of $ 2 $-DP$ _\mathcal{T} $ mechanisms.}
\label{fig:ex_utility}
\vspace{-8pt}
\end{figure}

\vspace{-10pt}
\section{Related Work}
\vspace{-3pt}
Several studies have questioned whether differential privacy is valid for correlated data. 
Kifer and Machanavajjhala\cite{kifer_no_2011}\cite{kifer_rigorous_2012}\cite{kifer_pufferfish:_2014} first raised the important issue that differential privacy may not guarantee privacy if  adversaries know the data correlations.
In their line of work, they\cite{kifer_no_2011} argued that it is not possible to ensure any utility in addition to privacy without making assumptions about the data-generating distribution and the background knowledge available to an adversary.
To this end, they proposed a general and customizable privacy framework called \textit{PufferFish}, in which the potential secrets, discriminative pairs, and data generation need to be explicitly defined. 
Yang et al.\cite{yang_bayesian_2015} further investigated differential privacy on correlated tuples described using a proposed Gaussian correlation model.
The privacy leakage w.r.t. adversaries with specified prior knowledge can be efficiently computed.

Zhu et al. \cite{zhu_correlated_2015} proposed correlated differential privacy by redefining the sensitivity of queries on correlated data; however, the privacy guarantee provided by this definition for spatio-temporal data is unclear.
Very recently, Liu et al. \cite{changchang_liu_dependence_2016} proposed dependent differential privacy by introducing dependence coefficients for analyzing the sensitivity of different queries under probabilistic dependences between tuples.
However, such dependence coefficients do not easily account for the spatio-temporal correlations.

Dwork et al. first studied \textit{differential privacy under continual observation} and proposed event-level/user-level privacy\cite{dwork_differential_2010-2}\cite{dwork_differential_2010}.
The previous studies in this setting focused on the problems of 
high dimension\cite{acs_case_2014}
\cite{li_differentially_2014}\cite{xiao_dpcube:_2014}, 
infinite sequence\cite{cao_differentially_2015}\cite{cao_differentially_2016}\cite{kellaris_differentially_2014},
sliding window queries\cite{cao_efficient_2013},
and real-time publishing\cite{fan_fast:_2013}.
\cite{haoran:_2015}. 
None of them addressed the problem of  temporally correlated data.

To the best of our knowledge, no study has reported the risk of differential privacy under temporal correlations for the continuous aggregate release setting.
Although a few studies\cite{shokri_quantifying_2011}\cite{xiao_protecting_2015} have considered a similar adversarial model in which the adversaries have prior knowledge of temporal correlations represented by Markov chains, they focused on location privacy in the single-user setting.
Shokri et al.\cite{shokri_quantifying_2011} proposed an evaluation framework for location privacy protection, assuming that the adversary knows the transition probabilities of each user.
Xiao et al.\cite{xiao_protecting_2015} proposed a mechanism extending DP for single user location sharing  under temporal correlations modeled by Markov chains.
In contrast, the scenario in this paper focuses on quantifying the privacy loss of traditional DP mechanisms under temporal  correlations for continuous aggregate release setting.

\vspace{-5pt}

\section{Conclusions}
In this paper, we quantified the risk of differential privacy under temporal  correlations by formalizing, analyzing and calculating the privacy loss against adversaries who have varying degrees of temporal  correlations.
This work opens up interesting future research directions, such as  modeling temporal correlations with other type of correlations (e.g. tuple-wise correlations), and combining our methods with the previous studies that neglected the effect of temporal  correlations in order to bound the temporal privacy leakage.

\section{Acknowledgment}
This work was supported by JSPS KAKENHI Grant Number 16K12437, 
the National Institute of Health (NIH) under award number R01GM114612, 
the Patient-Centered Outcomes Research Institute (PCORI) under contract ME-1310-07058, 
and the National Science Foundation under award CNS-1618932.



\bibliographystyle{abbrv}
\bibliography{./002_Correlation_icde17}

%


\appendices

\vspace{-10pt}

\section{Proof of Theorem \ref{thm:lfp}}
\label{appex:thm_tpl}
We need Dinkelbach's Theorem and Lemma \ref{lem:lfp_proof}  in our proof.
\begin{theorem}[Dinkelbach's Theorem\cite{dinkelbach_nonlinear_1967}]
\label{thm:Dinkelbach}
In a linear-fractional programming problem, suppose that the variable vector is $\boldsymbol{x}$ and the objective function is represented as {\small $\frac{Q(\boldsymbol{x})}{D(\boldsymbol{x})}$}.
Vector $\boldsymbol{x^*}$ is an optimal solution if and only if 
 \begin{myAlignSSS}
\max\{Q(\boldsymbol{x})-\lambda*D(\boldsymbol{x})\} = 0 
\text{ where } \lambda= \frac{Q(\boldsymbol{x^*})}{D(\boldsymbol{x^*})}.  \label{eq:dink}
\end{myAlignSSS}
\end{theorem}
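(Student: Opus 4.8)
The plan is to prove the two directions of the equivalence by reducing everything to a single sign manipulation that is valid because the denominator is strictly positive on the feasible region. Throughout I would invoke the standing assumption of linear-fractional programming that $D(\boldsymbol{x}) > 0$ for every feasible $\boldsymbol{x}$ in the feasible region $S$ (in our setting $D(\boldsymbol{x}) = \boldsymbol{d}\boldsymbol{x} > 0$ since the $x_j$ are strictly positive by \eqref{eq:lfp_end2} and $\boldsymbol{d}$ is a nonnegative row of a stochastic matrix), and that $S$ is nonempty and compact so that the maxima in question are attained. The cornerstone is the elementary observation that, for any feasible $\boldsymbol{x}$ and any scalar $\lambda$,
\[
\frac{Q(\boldsymbol{x})}{D(\boldsymbol{x})} \leq \lambda \iff Q(\boldsymbol{x}) - \lambda D(\boldsymbol{x}) \leq 0,
\]
which holds precisely because multiplying by $D(\boldsymbol{x}) > 0$ preserves the direction of the inequality.

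For the forward direction, I would assume $\boldsymbol{x^*}$ is optimal for the fractional objective, so $\lambda = Q(\boldsymbol{x^*})/D(\boldsymbol{x^*})$ is the maximal ratio over $S$. Applying the cornerstone with this $\lambda$ to every feasible $\boldsymbol{x}$ gives $Q(\boldsymbol{x}) - \lambda D(\boldsymbol{x}) \leq 0$, hence $\max_{\boldsymbol{x}}\{Q(\boldsymbol{x}) - \lambda D(\boldsymbol{x})\} \leq 0$. Evaluating the bracket at $\boldsymbol{x} = \boldsymbol{x^*}$ yields exactly $Q(\boldsymbol{x^*}) - \lambda D(\boldsymbol{x^*}) = D(\boldsymbol{x^*})\big(Q(\boldsymbol{x^*})/D(\boldsymbol{x^*}) - \lambda\big) = 0$, so the supremum is attained and equals $0$, establishing \eqref{eq:dink}.

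For the converse, I would assume \eqref{eq:dink} holds, i.e.\ $\max_{\boldsymbol{x}}\{Q(\boldsymbol{x}) - \lambda D(\boldsymbol{x})\} = 0$ with $\lambda = Q(\boldsymbol{x^*})/D(\boldsymbol{x^*})$ for a feasible $\boldsymbol{x^*}$. The max being $\leq 0$ means $Q(\boldsymbol{x}) - \lambda D(\boldsymbol{x}) \leq 0$ for all feasible $\boldsymbol{x}$, and running the cornerstone backwards turns this into $Q(\boldsymbol{x})/D(\boldsymbol{x}) \leq \lambda$ for all $\boldsymbol{x} \in S$; hence $\lambda$ is an upper bound on the fractional objective. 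Since $\boldsymbol{x^*}$ itself attains the ratio $\lambda$ by the very definition of $\lambda$, it realizes this upper bound and is therefore optimal.

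As supporting context, and the part I expect to require the most care, I would record that the parametric function $F(\lambda) := \max_{\boldsymbol{x} \in S}\{Q(\boldsymbol{x}) - \lambda D(\boldsymbol{x})\}$ is a pointwise maximum of affine functions of $\lambda$, hence convex and continuous, and is strictly decreasing because $D(\boldsymbol{x}) > 0$; this guarantees that the root $F(\lambda) = 0$ is unique and coincides with the optimal value, which is what legitimizes solving the fractional program through this parametrized family. The main obstacle is thus not the algebra but the positivity of $D$: every step, namely the sign equivalence, the strict monotonicity of $F$, and the attainment of the maxima, depends on $D(\boldsymbol{x}) > 0$ over a nonempty compact feasible region, so I would state these hypotheses explicitly and verify them for the constraint set \eqref{eq:lfp_end}--\eqref{eq:lfp_end2} before carrying out the equivalence.
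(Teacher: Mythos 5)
Your proof is correct, but there is no proof in the paper to compare it against: the paper states Dinkelbach's theorem purely as a cited classical result\cite{dinkelbach_nonlinear_1967} and uses it as a black box inside the proof of Theorem~\ref{thm:lfp} in Appendix~\ref{appex:thm_tpl}. Your two-direction argument---the sign equivalence $\frac{Q(\boldsymbol{x})}{D(\boldsymbol{x})} \leq \lambda \iff Q(\boldsymbol{x})-\lambda D(\boldsymbol{x}) \leq 0$, valid because $D(\boldsymbol{x})>0$ on the feasible set, applied once in each direction, with the bracket vanishing at $\boldsymbol{x}^*$ by the very definition of $\lambda$---is the standard elementary proof of this equivalence and is sound. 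What it buys is self-containedness: the paper's correctness claim for Theorem~\ref{thm:lfp} ultimately rests on this citation, and your argument shows the needed direction is a two-line consequence of positivity of the denominator, with no machinery from \cite{dinkelbach_nonlinear_1967} actually required. Your observation that in this setting $D(\boldsymbol{x})=\boldsymbol{d}\boldsymbol{x}>0$ because $\boldsymbol{d}$ is a stochastic row (so some $d_j>0$) and all $x_j>0$ by \eqref{eq:lfp_end2} is exactly the verification the paper leaves implicit.

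One technical slip, worth fixing but not fatal: the feasible region \eqref{eq:lfp_end}--\eqref{eq:lfp_end2} is cut out by \emph{strict} inequalities, hence open, so it is not compact and you cannot derive attainment of the maxima from compactness as you assert. Attainment instead follows from scale invariance: the ratio $Q(\boldsymbol{x})/D(\boldsymbol{x})$ and the constraints $e^{-\alpha^B_{t-1}} \leq x_j/x_k \leq e^{\alpha^B_{t-1}}$ are invariant under $\boldsymbol{x}\mapsto c\boldsymbol{x}$ with $c>0$, so one may normalize (e.g., fix the smallest coordinate to $m$ and work with the bounded ratios $y_j=x_j/x_n\in[1,e^{\alpha^B_{t-1}}]$), which is precisely the device the paper employs in Lemma~\ref{lem:lfp_proof}; after normalization the suprema are attained and your equivalence goes through verbatim. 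Relatedly, your claim that $F(\lambda)=\max_{\boldsymbol{x}}\{Q(\boldsymbol{x})-\lambda D(\boldsymbol{x})\}$ is \emph{strictly} decreasing needs attainment too (then $F(\lambda_1)\geq Q(\hat{\boldsymbol{x}})-\lambda_1 D(\hat{\boldsymbol{x}})=F(\lambda_2)+(\lambda_2-\lambda_1)D(\hat{\boldsymbol{x}})>F(\lambda_2)$ for the maximizer $\hat{\boldsymbol{x}}$ at $\lambda_2$), since $\inf D$ over the open region can be $0$; in any case that paragraph is supporting context and is not needed for the stated if-and-only-if.
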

\vspace{-8pt}
\begin{lemma}
\label{lem:lfp_proof}
For the following maximization problem ({\small $ k_1,...,k_n\in \mathbb{R}$}) with the same constraints as the ones in the linear-fractional programming \eqref{eq:lfp}$ \sim $\eqref{eq:lfp_end2},
\begin{myAlignSS}
 \textnormal{maximize } &  \text{ } k_1 x_1+\cdots+k_{n} x_{n}  \nonumber  \\
 \textnormal{subject to  } 
  & \text{ } e^{-{{\alpha}^B_{t-1}}}*{x_k}  \leq {x_j}\leq e^{{{\alpha}^B_{t-1}}}*{x_k} ,   \nonumber \\
  &\text{ }  0 < {x_j}< 1 \text{ and }  0<x_k<1, \nonumber  \\
  & \text{ where } x_j,x_k \in \bm{x}, \text{ } j,k \in  [1,n]. \nonumber
\end{myAlignSS} 
\vspace{-15pt}
\noindent
an optimal solution is as follows: if  {\small $ k_i>0 $}, let {\small $ x_i=e^{{{\alpha}^B_{t-1}}} m$} where $ m$ is a positive real number; if  {\small $ k_i\leq 0 $}, let {\small $  x_i=m$}.
\end{lemma}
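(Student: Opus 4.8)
The plan is to exploit the separable structure of the linear objective together with a transparent reformulation of the ratio constraints, and then establish optimality by a coordinatewise exchange argument. First I would rewrite the feasible region. The constraints $e^{-\alpha^B_{t-1}} x_k \le x_j \le e^{\alpha^B_{t-1}} x_k$ for all pairs $j,k$ simply say that the ratio between the largest and smallest coordinate is at most $e^{\alpha^B_{t-1}}$. Setting $m \eqdef \min_i x_i$, this is equivalent to $m \le x_i \le e^{\alpha^B_{t-1}} m$ for every $i$: taking $k$ to be an index attaining the minimum yields all the upper bounds $x_j \le e^{\alpha^B_{t-1}} m$, while the lower bounds $x_j \ge e^{-\alpha^B_{t-1}} x_k$ are automatic since $e^{-\alpha^B_{t-1}} x_k \le m \le x_j$. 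Thus, once the minimum value $m$ is fixed, each coordinate $x_i$ may independently range over the interval $[m,\, e^{\alpha^B_{t-1}} m]$.

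Next, since the objective $k_1 x_1 + \cdots + k_n x_n$ is separable and linear in each coordinate, it is maximized coordinatewise over these independent intervals: whenever $k_i > 0$ it is increasing in $x_i$, so we push $x_i$ to the upper bound $e^{\alpha^B_{t-1}} m$; whenever $k_i \le 0$ it is non-increasing, so we push $x_i$ to the lower bound $m$. This is exactly the claimed assignment. To make this an optimality statement rather than a heuristic, I would phrase it as an exchange argument: given an arbitrary feasible $\bm{x}$ with $m = \min_i x_i$, define $\hat{\bm{x}}$ by the stated rule; then $\hat x_i \ge x_i$ whenever $k_i > 0$ and $\hat x_i \le x_i$ whenever $k_i \le 0$, so term by term $k_i \hat x_i \ge k_i x_i$, giving $\sum_i k_i \hat x_i \ge \sum_i k_i x_i$. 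Hence no feasible point beats one of the stated form.

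The step deserving the most care is reconciling this with the open box constraint $0 < x_j < 1$: the extremal point has coordinates equal to $m$ or $e^{\alpha^B_{t-1}} m$, and for a given competitor $\bm{x}$ the value $e^{\alpha^B_{t-1}} m$ need not lie strictly below $1$, so $\hat{\bm{x}}$ may touch or cross the box boundary. I would resolve this by noting that the ratio constraints are scale invariant, so the feasible set is the intersection with the open box of a cone; along the optimal direction the objective is monotone in the common scale $m$, so its extremum over the box is approached at the appropriate boundary, and the described vector identifies that direction for any admissible $m > 0$. In the application to Theorem~\ref{thm:lfp} this subtlety is in fact vacuous, because the quantity ultimately of interest is the ratio $\bm{q}\bm{x} / \bm{d}\bm{x}$, which is homogeneous of degree zero in $\bm{x}$; the common scale $m$ cancels, so every $m > 0$ gives the same optimal value and the lemma supplies precisely the coordinate pattern ($e^{\alpha^B_{t-1}} m$ versus $m$) needed to evaluate it.
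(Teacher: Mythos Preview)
Your proposal is correct and follows essentially the same approach as the paper's proof: both normalize by the minimum coordinate (the paper via the substitution $y_j = x_j/x_n$, you via $m = \min_i x_i$) to reduce the pairwise ratio constraints to independent box constraints, and then maximize the separable linear objective coordinatewise according to the sign of $k_i$. Your exchange argument is a slightly more explicit version of the same idea, and your discussion of the open box constraint and scale invariance is in fact more careful than the paper, which glosses over that point.
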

\vspace{-5pt}
\begin{proof}
Without loss of generality, we suppose that the smallest value in the optimal solution is $ x_{n}  $.
Let $ y_j $ be {\footnotesize $\frac{ x_j}{x_{n} }$} for {\footnotesize $ j\in[1,n-1] $}; then, {\footnotesize $ 1 \leq y_j \leq {e^{{\alpha}^B_{t-1}}} $}.
Replacing $ x_j $ with $ y_j $ and setting {\footnotesize $ x_{n} =m$}, we have a new objective function {\footnotesize $ \frac{1}{m} * ( k_1 y_1+\cdots+k_{n-1} y_{n-1}+k_{n} ) $} whose solution is equivalent to the original one.
Because the only constraint is {\footnotesize $ 1 \leq y_j \leq e^{{{\alpha}^B_{t-1}}} $}, the following is an optimal solution for the maximum objective function: if {\footnotesize $ k_j>0 $}, let {\footnotesize $ y_j =e^{{{\alpha}^B_{t-1}}} $}; if {\footnotesize $ k_j \leq 0 $}, let {\small $ y_j=1$}. 
\end{proof}

\begin{proof}[Proof of Theorem \ref{thm:lfp}]
We first prove that, under the conditions shown in Theorem \ref{thm:lfp}, i.e., Inequalities \eqref{eq:cond1} and \eqref{eq:cond2}, an optimal solution of the problem \eqref{eq:lfp}$ \sim $\eqref{eq:lfp_end2} is:
\vspace{-2pt}
\begin{myAlignSSS}
\bm{x}^* =
\begin{cases}
 x_j = e^{{{\alpha}^B_{t-1}}}*m & x_j \in \bm{x}^+ \\
x_k = m & x_k \in \bm{x}^- \label{eq:opt_sol}
\end{cases},
\end{myAlignSSS}
\vspace{-5pt}
where $ m $ is a positive real number.

\vspace{-2pt}
For convenience, we rewrite our objective function as {\footnotesize $\frac{Q(\bm{x})}{D(\bm{x})}  $} in which {\footnotesize $ Q(\bm{x})=\bm{q}\bm{x} $} and  {\footnotesize $ D(\bm{x})=\bm{d}\bm{x} $}.
Substituting {\footnotesize $ \bm{x}^* $} of Equation \eqref{eq:opt_sol} into {\footnotesize $ Q(\bm{x}) $} and {\footnotesize $ D(\bm{x}) $}, we have {\footnotesize $ Q(\bm{x}^*)=q(e^{{{\alpha}^B_{t-1}}}-1)+1 $} and {\footnotesize $ D(\bm{x}^*)=d(e^{{{\alpha}^B_{t-1}}}-1)+1 $} (recall that {\footnotesize $ q=\sum{\bm{q}^+} $} and {\footnotesize $d=\sum{\bm{d}^+} $}).
Then, we can rewrite Inequalities \eqref{eq:cond1} and \eqref{eq:cond2} 
in Theorem \ref{thm:lfp} 
as follows.
\begin{myAlignSSS}
&\frac{q_j}{d_j} > \frac{Q(\bm{x}^*)}{D(\bm{x}^*)}, & \forall j\in[1,n] \text{ where } q_j \in \bm{q}^+, d_j \in \bm{d}^+ \label{eq:1}\\
&\frac{q_k}{d_k} \leq \frac{Q(\bm{x}^*) }{D(\bm{x}^*)}, & \forall k\in[1,n] \text{ where } q_k \in \bm{q}^-, d_k \in \bm{d}^- \label{eq:2}
\end{myAlignSSS}

\vspace{-8pt}
According to Dinkelbach's Theorem, to prove $ \bm{x}^* $ in \eqref{eq:opt_sol} is an optimal solution,  we only need to prove the following equation because of {\footnotesize $ D(\bm{x}^*)>0 $}. 
\begin{myAlignSS}
\text{maximize } \{D(\bm{x}^*)  Q(\bm{x}) - Q(\bm{x}^*)  D(\bm{x}) \}=0.  \label{eq:thm_max}
\end{myAlignSS}
\vspace{-10pt}
We expand the above equation as follows.
\begin{myAlignSSS}
&\text{Eqn.}\eqref{eq:thm_max}=D(\bm{x}^*)  (\bm{q}^+ \bm{x}^+ + \bm{q}^- \bm{x}^-) 
 - 
 Q(\bm{x}^*)  (\bm{d}^+ \bm{x}^+ + \bm{d}^- \bm{x}^-)  \nonumber \\
 &= \big(D(\bm{x}^*)  \bm{q}^+ - Q(\bm{x}^*)  \bm{d}^+\big) \bm{x}^+
 + 
 \big(D(\bm{x}^*)  \bm{q}^- - Q(\bm{x}^*)   \bm{d}^-\big) \bm{x}^-   \label{eq:last}  
\end{myAlignSSS}

\vspace{-10pt}
By Equations \eqref{eq:1} and \eqref{eq:2}), we have
{\footnotesize $ D(\bm{x}^*)  \bm{q}^+ - Q(\bm{x}^*)  \bm{d}^+ >0$} and {\footnotesize $ D(\bm{x}^*)  \bm{q}^- - Q(\bm{x}^*)   \bm{d}^- \leq 0 $}.
Hence, according to Lemma \ref{lem:lfp_proof}, we can obtain the maximum value in Equation \eqref{eq:thm_max} by setting {\footnotesize $ \bm{x}^+=[e^{{\alpha}^B_{t-1}}*m] $} and {\footnotesize $ \bm{x}^-=[m] $} where {\footnotesize $ m $} is a positive real number.
Now, we obtain the maximum value in Equation \eqref{eq:thm_max}.
\begin{myAlignSS}
&\big((D(\bm{x}^*) q - Q(\bm{x}^*) d)  e^{\varepsilon} m
 + 
\big(D(\bm{x}^*)  (1-q) - Q(\bm{x}^*)  (1-d)) \big) m \nonumber \\
=& \big(D(\bm{x}^*) (q e^\varepsilon+(1-q)) - Q(\bm{x}^*) (d  e^\varepsilon +(1-d))\big)m  \nonumber \\
=& \big(D(\bm{x}^*) Q(\bm{x}^*) - Q(\bm{x}^*) D(\bm{x}^*)\big) m = 0  \nonumber 
\end{myAlignSS}

\vspace{-10pt}
\noindent
Therefore, by Dinkelbach's Theorem, $ \bm{x}^* $ is an optimal solution for the problem \eqref{eq:lfp}$ \sim $\eqref{eq:lfp_end2}.
Substituting them into the objective function \eqref{eq:lfp}, we obtain the maximum value 
 {\footnotesize $ \frac{q(e^{{\alpha}_{t-1}^B} -1) + 1 }{d(e^{{\alpha}_{t-1}^B} - 1) + 1} $}.
\end{proof}

\end{document}